\documentclass{amsart}
\usepackage{amssymb,amsmath}

\newtheorem{definition}{Definition}[section]
\newtheorem{corollary}[definition]{Corollary}

\newtheorem{lemma}[definition]{Lemma}

\newtheorem{property}[definition]{Property}
\newtheorem{remark}[definition]{Remark}
\newtheorem{theorem}[definition]{Theorem}

\title[A relativized reducibility method]{Confluent terminating
extensional
lambda-calculi with surjective pairing and terminal type}
\author{YOHJI AKAMA}
\address{
Mathematical Institute, Tohoku University, Aoba, Sendai, JAPAN, 980-8578}


\def\A{\mathcal{A}}
\def\all#1#2{\Pi#1.\, #2}
\def\c{*}
\def\d#1#2#3{\mathrm{d}(#1,\ #2,\ #3)}

\def\etatop{\eta_{top}}
\def\FTV{\mathrm{FTV}}
\def\FV{\mathrm{FV}}
\def\gentop{{g}}

\def\ih#1{induction hypothesis (CR#1)}
\def\isomorphicT{\mathit{Iso}(\top)}
\def\isot{\in \isomorphicT}

\def\inj#1#2#3{\mathrm{in}_{#2,#3}^{#1}}
\def\KBC#1{\left(#1\right)'}
\def\Lam#1#2{\Lambda #1.\, {#2}}
\def\lam#1#2{\lambda #1.\, {#2}}
\def\l{{\pi_1}}

\def\N{\mathcal{N}}
\def\pairing#1#2{\langle #1,\, #2\rangle}
\def\plusetatop#1{{+.{\eta_{top#1}}}}
\def\prts{\KBC{\lambda^2\beta\eta\pi*}}
\def\r{{\pi_2}}
\def\R{\mathcal{R}}
\def\RED{\mathsf{RED}}
\def\reta{\overline{\eta}}
\def\rSP{\overline{SP}}
\def\rtc{\stackrel{*}{\to}}
\def\rts{\KBC{\lambda\beta\eta\pi*}}
\def\S{\mathcal{S}}
\def\SPtop#1{SP_{top#1}}
\def\sr{\left[\,\vec{x}:=\vec{u}\,\right]}
\def\SN{\mathcal{SN}}
\def\Sum{\lambda^{\top,\to,\times,+}}
\def\T{\mathcal{T}}
\def\tt#1{\mathop{T(#1)}}
\def\wfi#1{\mathrm{WFI}\left(#1\right)}
\begin{document}


\begin{abstract}For the lambda-calculus with surjective pairing and
  terminal type, 
  Curien and Di Cosmo were inspired by Knuth-Bendix completion, and introduced a confluent rewriting system that (1) extends the
  naive rewriting system, and (2) is stable under contexts. 
  The rewriting system has (i) a rewrite rule ``a term of a terminal type rewrites to a term constant $*$, unless the term is not $*$,'' 
  (ii) rewrite rules for the extensionality of  function types and product types, and rewrite rules mediating (i) and (ii). Curien and Di Cosmo supposed that
  because of (iii), any reducibility method cannot prove the strong normalization~(SN) of Curien-Di Cosmo's rewriting system, and 
  they left the  SN open.
  By relativizing Girard's reducibility method to the $\c$-free terms, we
  prove SN of their rewriting,  and SN of the extension by polymorphism.
 The relativization works because: for any SN term $t$, and  for any variable $z$ of terminal type  not occurring in $t$,
$t$ with all the occurrences of $\c$ of terminal type replaced by the variable $z$ is SN. 
KEYWORDS: relativized reducibility method; strong normalization; 
\end{abstract}
\maketitle
\section{Introduction}\label{sec:introduction}

Equational theories for terminal types, unit types, singleton types are useful in mathematics and computer science:
 \begin{itemize}
 \item Coherence problem of cartesian closed category~\cite{Min77,Min79,Min92,MacLane82,LamSco}.
  
  \item An extension LF$^{\Sigma,1}$~\cite{Sarkar} of LF~\cite{HHP93} by dependent sum types and the type 1  for the empty context.
 \item Useless code elimination~\cite{UEC,KobaUEC99} 
 \item Proof irrelevant types~\cite{ACP}.
 \item Higher-order unification for a proof assistant system \texttt{Agda}~\cite{AP}.  
 \texttt{Agda}~\cite{Norell} supports $\Sigma$-types in form of records with associated $\eta$-equality in its general form.
 \end{itemize}
 We study the extensional $\lambda$-calculus $\lambda\beta\eta\pi*$ with surjective pairing and unit types.
It is an equational theory useful to solve the coherence problem of cartesian closed category. 
The equational theory $\lambda\beta\eta\pi*$ 
is decidable.  As we see below, typical proofs of the decidability employ, more or less, the following two methods:
\begin{itemize}
\item
Tait's reducibility methods to prove the strong normalization~(SN, for short) of rewriting relations;
Here, SN states that there is no infinite sequence of the rewriting relation.
Variants of Tait's reducibility method include reducibility candidate method~\cite{GTL} and computability closure~\cite{Blanqui}. 

\item  Logical relation methods. For the historical account, see \cite{HRR}. 
\end{itemize} 
In both methods, by induction on types, we define a family $\{P_{\varphi}\}_{\varphi}$ of sets  of terms,  indexed by all types $\varphi$s. Here 
\begin{align*}
(t_{1},\ldots,t_{n})\in P_{\varphi\to\psi}:\iff\forall (s_1,\ldots,s_{n}) \in P_{\varphi}.\, \left( (t_{1} s_{1},\ldots,t_{n} s_{n}) \in P_{\psi}\right).
\end{align*}
 Then we carry out an induction on terms to prove the target property. 
Logical relations more fit to semantical problems~\cite{SalvatiniWalukiewicz} of $\lambda$-calculi.

We list proofs of the decidability of the equational theory $\lambda\beta\eta\pi*$.
\begin{itemize}
\item Type-directed expansions. See \cite{Min77,Min79,hagiya,Cubric,A93,DK94,JG95,Lindley07}, to cite a few.
The SN proof of the type-directed expansion in \cite{JG95} is as follows: 
They first restricted the places of terms to be replaced,  
proved the SN of such restricted rewriting system
by a reducibility method, and then derived the SN of the type-directed expansion.

\item Sarkar's algorithm.
The extension LF$^{\Sigma,1}$ corresponds to $\lambda\beta\eta\pi*$. 
Sarkar~\cite{Sarkar}  studied LF$^{\Sigma,1}$ by the standard techniques of \cite{HarperPfenning}.
For LF$^{\Sigma,1}$, to give a type-checking algorithm,  Sarkar~\cite{Sarkar} 
provided an decision algorithm of the definitional equality. For the decision algorithm,
he proved the  completeness for equality by a Kripke logical relation, 
 the soundness of the algorithm and the existence of canonical forms in LF$^{\Sigma,1}$.

\item Normalization-by-evaluation~(\cite{CDS98,BDF04,ACD}, to cite a few). From a given term $t$, we obtain a normal form $v$ judgmentally equal to $t$, by evaluating $t$ and then by reification it. 
\cite{BDF04}~(\cite{ACD}, resp.) used Grothendieck logical relation~(Kripke logical relation, resp.) between well-typed terms $t$ and
semantic objects $d$, which for base types expresses that $d$ reifies to a normal
form $v$ judgmentally equal to $t$.
\item  A translation that incorporates type-directed expansions by type-indexed functions on
 terms. 
 The translation reduces the decidability of the equational theory $\lambda\beta\eta\pi*$ to that of the
 corresponding intensional equational theory~\cite{Gog05,Yokouchi}. It, however, turns out that this idea does
 not yield a decision procedure for the equational theory $\lambda^{2}\beta\eta\pi*$, which is the polymorphic extension of $\lambda\beta\eta\pi*$.
 \end{itemize}
 
In \cite{CD},  the decidability of the equational theory $\lambda\beta\eta\pi*$ are proved, much more based on rewriting technique~\cite{Baader}.
They first introduced a rewriting system that generates the equational theory $\lambda\beta\eta\pi*$, as follows: 
To the simply-typed $\beta\eta$-rewriting, we add the rewrite rule ``a term of type $\top$ rewrites to $*$ unless the term is not $*$,''  and then keep adding rewriting rules, like 
from a term rewriting system we obtain a confluent term rewriting system through Knuth-Bendix completion~\cite{Baader}.
For this extensional $\lambda$-calculus $\rts$ with surjective pairing and terminal type, 
 they proved the weak normalization of the rewriting system, and derived the confluence from it.
This rewriting system so directly depends on the rewriting technique.
The reducibility methods are not so flexible as rewriting rules.
Curien and Di Cosmo suggested no direct application of reducibility method proves SN of $\rts$.

We prove the SN of Curien-Di Cosmo's rewriting system $\rts$ by relativizing Girard's reducibility method to the $\c$-free terms. 
We introduce the reducibility predicates for $\rts$, apply them only for the set of $\c$-free terms,  derive the 
SN of all $\c$-free terms. 
To make our relativization argument handy, we introduce the non-Haussdorf Alexandrov topological space of terms for the rewriting, and interpret our argument.

The rest of paper is organized as follows: In the next section, we recall the definition of Curien-Di Cosmo's rewriting system~(Subsection~\ref{subsec:CD}), and explain how their rewriting system suggests relativization of reducibility method~(Subsection~\ref{subsec:CDC wanted}),
and uncover the essence of the relativization by using Alexandrov topological space~\cite{StoneSpaces}, in (Subsection~\ref{subsec:alex}).
In Section~\ref{sec:CD}, we prove SN of  $\rts$.  In Section~\ref{sec:polymorphism}, we prove  SN of $\prts$, the extension by the
polymorphism.
In Section~\ref{subsec:expansion}, we comment type-directed expansions as related work for Curien-Di Cosmo's rewriting. 

The preliminary version of this paper appeared as \cite{Akama17}.  
\section{Preliminary}
\subsection{Curien and Di Cosmo's rewriting system based on eta-reduction}\label{subsec:CD}

We recall the equational theory $\lambda\beta\eta\pi*$ from \cite{CD}.

Types are built up from the distinguished type constant $\top$, and type
variables, by means of the product type $\varphi\times \psi$ and the
function type $\varphi \to \psi$. Terms are built up from the
distinguished term constant $*^\top$ and term variables
$x^\varphi,y^\varphi,\ldots, x^\psi,y^\psi,\ldots$, by means of
$\lambda$-abstraction $(\lam{x^\varphi}{t^\psi})^{\varphi\to\psi}$, term
application $(u^{\varphi\to\psi} v^\varphi)^\psi$, pairing
$\pairing{u^\varphi}{v^\psi}^{\varphi\times\psi}$, left-projection $(\l
t^{\varphi\times\psi})^\varphi$, and right-projection $(\r
t^{\varphi\times\psi})^\psi$. The superscript represents the
type. The superscript is often omitted. The
set of free variables of a term $t$ is denoted by $\FV(t)$. The
equational theory $\lambda\beta\eta\pi*$ consists of the following
axioms:
\begin{align*}
 &(\beta)\quad &(\lam{x}{u})v   &=u[x:=v].\\
 &(\l)         &\l\pairing{u}{v}&=u.  &(\r)\quad         &\r\pairing{u}{v}=v.\\
 &(\eta)       &\lam{x}{tx}     &=t, \quad\mbox{($x\in\FV(t)$.)}\\
 &(SP)     &\pairing{\l u}{\r u}&=u.\\
 &(c)          &s^\top&=*^\top. 
\end{align*}
By the last equality,  the type $\top$ corresponds to the singleton. The
singleton does to the terminal object of a cartesian closed
category~(CCC for short). So $\top$
is called the \emph{terminal type}.

By orienting the equational axioms $(\beta),(\l),(\r),(\eta), (SP)$ left to right, we obtain rewrite rule schemata. Let $(T)$ be a rewrite rule schema
$s^\top\to \c^\top$ $(s^\top \not\equiv\c^\top$). Here for terms $t$ and $s$, we write $t\equiv s$, provided that by renaming bound variables, $t$ becomes identical to $s$.
Let $\to$ be the closure of these rewrite rule schemata by contexts.
By abuse of notation, we write $\lambda\beta\eta\pi*$ for a so-obtained rewriting system. The reverse of $\to$ is denoted by $\leftarrow$.  
$\rtc$ is the reflexive, transitive closure of $\to$. Let us abbreviate confluence by CR.

 The rewriting system
$\lambda\beta\eta\pi*$ is \emph{not} CR, as follows: In each line of the following, 
$x$ and $y$ are variables, and  it is not the
case that for the leftmost term $t_1$ and the rightmost $t_2$, there is a term $t_0$ such that $t_1\stackrel{*}{\to} t_0 \stackrel{*}{\rightarrow} t_2$\,:
\begin{align}
y^{\varphi\to \top}\leftarrow &\lam{x}{(y x)^\top}\to\lam{x}{*^\top}, \nonumber \\
x\leftarrow &\pairing{(\l x)^\top}{(\r x)^\top}\to\to
 \pairing{*^\top}{*^\top},  \label{SPpar}\\
\lam{x^\top}{y *}\leftarrow&\lam{x^\top}{y x^\top}\to y^{\top\to\varphi},\nonumber \\
\pairing{\l x}{*}\leftarrow&\pairing{(\l x)^\varphi}{(\r x)^\top}\to x^{\varphi\times\top}, &&\nonumber \\
\pairing{*}{\r x}\leftarrow&\pairing{(\l x)^\top}{(\r x)^\varphi}\to x^{\top\times\varphi}.&&\nonumber
\end{align}
The behavior of the rewrite rule schemata ($\gentop$) is not so simple as it looks like. The rewrite relation $\to_{\beta\eta\pi\l\r SP}$ is CR~\cite{Pottinger81}.
In the type-free setting, $\to_{\beta SP}$ is not CR~\cite{K80}.  In dependent type theories such as \texttt{Agda}, the unit type~(=terminal type) is important in relation to the record type, but
in the presence of the unit type, the type-checking is rather difficult; Not all subterms has a type label as our terms. So, we should  infer the type of the term before we apply the equational axiom $(c)$ to cope with a typing rule such as ``$M$ has a type $A$ whenever $M$ has a type $B$ such that $A$ is equal to $B$.''

For the equational theory $\lambda\beta\eta\pi*$, Curien and Di Cosmo, inspired by completion of term rewriting systems,
introduced a rewriting system $\rts$ in \cite{CD}. First they inductively defined the types \emph{``isomorphic to''} the terminal type $\top$ and
the \emph{canonical terms} of such types.
\begin{definition}[ $\rts$ ]  \label{def:isot}
\begin{itemize}
 \item $\top$ is ``isomorphic to'' $\top$ and the canonical term of $\top$
       is
 $*^\top$.
 \item Suppose $\varphi$ is a type and $\tau$ is a type ``isomorphic to''
       $\top$. Then the type $\varphi\to\tau$ is ``isomorphic to''
       $\top$ and  the canonical term
       $\c^{\varphi\to\tau}$ of $\varphi\to\tau$ is $\lam{x^\varphi}{\c^\tau}$.
       
 \item If each type $\tau_i$ is ``isomorphic to'' $\top$ ($i=1,2$), then
the type       $\tau_1\times\tau_2$ is ``isomorphic to''      $\top$  and
       the
       canonical term $\c^{\tau_1\times\tau_2}$ of $\tau_1\times\tau_2$  is
       $\pairing{\c^{\tau_1}}{\c^{\tau_2}}$.
\end{itemize}
 The set of types ``isomorphic to'' $\top$ is denoted by $\isomorphicT$.
    Whenever we write $\c^\varphi$, we tacitly assume $\varphi\isot$.
The canonical terms are not directly related to `the canonical forms of \cite[Sect.~8.1]{Sarkar}.

 The rewrite relation $\to$ of the rewriting system $\rts$ is defined by
the rewrite rule schemata obtained from the first five equational axioms
$(\beta),(\l), (\r),(\eta)$, and $(SP)$ of $\lambda\beta\eta\pi*$ by
orienting left to right, and the following four rewrite rule schemata:
 \begin{align*}
&(\gentop)         &u^\tau&\to \c^\tau, &\mbox{($u$ is not canonical.)}\\
&(\etatop)\quad & \lam{x^\tau}{t \c^\tau} &\to t, &
\mbox{($x\notin\FV(t)$.)}\\
&(\SPtop1)     &\pairing{\l u}{\c^\tau}&\to u,&\mbox{($u$ has type
  $\varphi\times \tau$.)}\\
  &(\SPtop2)     &\pairing{\c^\tau}{\r u}&\to u, &\mbox{($u$ has type
  $\tau\times \psi$.)}
 \end{align*}
The first rule~($\gentop$) schema that generates a canonical term $*^\tau$ is called ``gentop'' in \cite{CD}.
    \end{definition}

In \cite{CD}, the rewriting system $\rts$ is proved to be CR and weakly normalizing,
by using an ingenuous lemma for abstract reduction systems.  $\rts$ is non-left-linear and has a rewrite rule schema with side
conditions. We cannot apply criteria for CR of left-linear
(higher-order) term rewriting system based on closed condition of (parallel) critical pairs~(e.g., \cite{toyama,VO97}).
$\beta\eta\etatop\gentop$-reduction is the \emph{triangulation}~\cite{VOZ} of $\beta\eta\gentop$-reduction, and thus CR by \cite[Corollary~2.6]{VOZ}. However, $\rts$ is not a triangulation of the rewriting system $\lambda\beta\eta\pi*$;
As we see \eqref{SPpar}, $\gentop$-rule schema rewrites the one-step reduct $u^{\top\times\top}$ of $\pairing{\l u}{\r u}$  to the \emph{two-step} reduct of $\pairing{\l u}{\r u}$.
 This does not fit to the definition of the triangulation.

\subsection{Rewrite rule schema $(\etatop)$, and relativized reducibility method to the $\c$-terms}\label{subsec:CDC wanted}

All variations~(e.g., reducibility candidate method~\cite{GTL}, {computability closure}~\cite{Blanqui}) of
Tait's reducibility method uses reducibility predicates. The reducibility predicates for $\rts$ are as usual:

By an \emph{atomic type}, we mean the distinguished type constant $\top$ or a type variable.

\begin{definition}\label{cd:reducibility}
\begin{description}
\item[$(a)$]
  A term of an atomic type is \emph{reducible}, if the term is SN.  
  \item[$(\times)$]
  A term $t^{\varphi\times\psi}$ is \emph{reducible}, if so are
  $(\l t)^\varphi$ and $(\r t)^\psi$.  
  
  \item[$(\to)$]
  A term $t^{\varphi\to\psi}$ is
  \emph{reducible}, if for any reducible term $u^\varphi$, $(t u)^\psi$ is
  reducible.
  \end{description}
\end{definition}
Let $RED_{\varphi}:=\{t^{\varphi} \mid t^{\varphi} \mbox{is reducible}\}$.
All variations of reducibility method require  to show a \emph{key statement} 
\begin{align}\forall u^\varphi \in RED_{\varphi}\ \left( v^\psi [x^\varphi :=u^\varphi ]\in RED_{\psi}\right)
\implies\lam{x}{v}\in RED_{\varphi\to\psi}.\label{key statement for abstraction}
\end{align}

The rewrite rule schema $(\etatop)$, however, causes the difficulty to prove the key statement~\eqref{key statement for abstraction}, as follows~\cite{CD}:
In the reducibility candidate method \cite{GTL},  
an available auxiliary property is that, a term $t u$ is reducible, as soon as $s$ is reducible for all reducts
$s$ of $t u$. So the proof of the key
statement amounts to the proof that  all reducts of
$(\lam{x}{v}) u$ are reducible. The rewrite rule schema
($\etatop$) can rewrite $(\lam{x}{v})u$ to $(v' u)$ which is not
$v[x:=u]\equiv v$. 
The standard argument indeed proves
the following statement~(Lemma~\ref{lem:weakened sufficient condition T}~\eqref{lambda}):
\begin{align} 
\left\{\begin{array}{l}\forall u\in RED_{\varphi}\ \left( v[x:=u]\in RED_{\psi}\right)\mbox{ and}\\
 \left(v \equiv (v' *^{\top}),\ x\notin\FV(v')\ \implies\ v' \in RED_{\top\to\psi}\right)\end{array}\right\}
\implies\lam{x}{v}\in RED_{\varphi\to\psi}, \label{naive}
\end{align}
This immediately implies
\begin{align} 
v^\psi \in F\; \&\; \forall u\in RED_{\varphi}\ \left( v[x^\varphi :=u^\varphi ]\in RED_{\psi}\right)\implies\lam{x}{v}\in RED_{\varphi\to\psi}\cap F,\label{naive1}
\end{align}
where
\begin{definition}\label{star-free}Let $t$ be a term  of $\rts$.
 $t$ is called \emph{$\c$-free}, if the term constant $*^\top$ does not
 occur in $t$.  Let $F$ be the set of $\c$-free terms.
 Let $T$ be the set of terms and $\SN$ be the set of SN terms.  
\end{definition}
The lemma~\eqref{naive1} suggests to  split  $T\subseteq\SN$ into two statements
\begin{align}
&  F\subseteq\SN\implies T\subseteq\SN, \\ 
&  F\subseteq\SN, \label{dag}
\end{align}
and to prove $F\subseteq\SN$ by employing $\{\RED_{\varphi}\cap F\mid\varphi\ \mbox{is a type}\}$.

\subsection{Essence of reducibility predicate relative to $*$-free terms}\label{subsec:alex}

To prove $F\subseteq\SN$, we prove the following relativization of the key statement~\eqref{key statement for abstraction} to $F$.
\begin{align} 
v^\psi \in F\; \&\; \forall u\in (RED_{\varphi}\cap F)\,\left( v[x:=u]\in RED_{\psi}\cap F \right)\implies\lam{x}{v}\in RED_{\varphi\to\psi}\cap F.\label{naive2}
\end{align}
This follows from lemma~\eqref{naive1}, if for a function $f_{v^\psi,x^\varphi}(u^\varphi):=v^\psi[x^\varphi:=u^\varphi]$,
\begin{align}
       f_{v^\psi,x^\varphi} \left(RED_\varphi\cap F\right) \subseteq  RED_\psi \cap F\  \implies\  f_{v^\psi,x^\varphi} \left(RED_\varphi\right) \subseteq  RED_\psi.\label{important lemma}
\end{align}
So, relativizing reducibility method to $F$ is introducing a topology to $T$ such that
\begin{enumerate}
\item $f_{v^\psi,x^\varphi}:T\to T$ is continuous,
\item $\overline{\RED_{\varphi}\cap F}=\RED_{\varphi}$ $(\forall\varphi)$, where $\overline{(\ )}$ is the closure operation, and 
\item $F\subseteq\SN\implies T\subseteq\SN$.
\end{enumerate}
The pair of the first two implies \eqref{important lemma}, because
\begin{align*}
\RED_\psi=\overline{\RED_\psi\cap F}\ \supseteq \ \overline{f_{v,x}(\RED_\varphi\cap F)}\ \supseteq \  f_{v,x}( \overline{\RED_\varphi\cap F})=f_{v,x}( \RED_\varphi).
\end{align*}

 A topological space $X$ is called  \emph{Alexandrov}, if  there is a preorder $\le$ such that the closed sets are exactly the  upwardly closed sets.
The Alexandrov topological space induced by a preorder $\le$ is denoted by $\tt{\le}$.
Let $\le$ and $\sqsubseteq$ be preorders.
A function $f : \tt{\le} \to \tt{\sqsubseteq}$ is continuous, if and only if $f$ preserves the preorders.

For  Curien-Di Cosmo's rewriting $\to$, we consider the Alexandrov topological space $T=\tt{\rtc}$ of terms.
For every  $A\subseteq\tt{\rtc}$, the closure $\overline{A}$ is $\{t'\mid \exists t\in A.\,(t\rtc t')\}$.
This topology satisfies the above-mentioned three conditions:
\begin{enumerate}
\item
 $f_{v,x}$ is continuous, because
$u\to u'\implies v[x:=u]\rtc v[x:=u']$.

\item $\overline{RED_{\varphi}\cap F}=RED_{\varphi}$ $(\forall \varphi)$, because
\begin{align}
\mbox{A variable $z^\top$ does not occur in $t\in \SN$} \implies t[\c^\top := z^\top ] \in \SN, \label{essence}
\end{align}
as we will see in the proof of Lemma~\ref{lem:rel}~\eqref{red rel}.

\item $F\subseteq\SN\implies \tt{\rtc}\subseteq\SN$, by
$\overline{F}=\tt{\rtc}$ and the closedness of $\SN$.
\end{enumerate}
The Alexandrov topology uncover the essence of the reducibility method relativized to the $\c$-terms, that is, the property~\eqref{essence}.

The property~\eqref{essence} is also the essence of SN proof of  the polymorphic extension $\prts$ of $(\lambda\beta\eta\pi*)'$.
Girard proved the polymorphic $\lambda$-calculus $\lambda^{2}$ by employing the candidates of reducibility for all types~\cite{GTL}.
We will prove the SN of the Curien-Di Cosmo-style polymorphic $\lambda$-calculus $\prts$ by relativizing the reducibility candidate method to the $\c$-terms, as we did SN of 
$(\lambda\beta\eta\pi*)'$ by relativizing the reducibility method to the $\c$-terms.
For the SN proof of $\prts$, in Definition~\ref{def:RC}, we additionally require the following property to each reducibility candidate $\R$ of each type $\varphi$:
\begin{property}\label{property}
\begin{enumerate}
\item
 If $\c^\varphi$ is defined, then $\c^\varphi\in\R$; and
 
 \item 
For any $t\in \R$, and  for any variable $z^\top$  not occurring in $t$,
$t[\c^\top:=z^\top]\in\R$. 
\end{enumerate}
\end{property}
Then $\overline{\R\cap F}=\R$.

\section{SN proof  by relativized reducibility method} \label{sec:CD}

In our SN proofs, we will use a \emph{well-founded induction on a well-founded relation}.
A \emph{well-founded} relation is, by definition, $\A=(A,\succ)$ such that $\emptyset\ne\ \succ\ \subseteq A\times A$ and there is no infinite chain 
$a\succ a' \succ a'' \succ\cdots$. The \emph{well-founded induction} on a well-founded relation $\A=(A,\succ)$ is, by definition,
\begin{align*}
\wfi{\A}:\ \ \forall P\subseteq A\left[ \begin{array}{l}\forall x\in A\bigl(\forall x'\left(x\succ x'\Rightarrow x'\in P\right)
\implies x\in P\bigr)\\
\implies \forall x \in A(x\in P)\end{array}\right].
\end{align*}
We call the subformula $\forall x'\left(x\succ x'\Rightarrow x'\in P\right)$ the \emph{WF induction hypothesis}. 
For $n\ge1$ well-founded relations $\A_{i}=(A_{i},\, \succ_{i})$ $(i=1,\ldots,n)$, we define a binary relation 
$$\A_{1}\#\cdots\# \A_n=\left(A_{1}\times\cdots\times A_{n},\ \succ_{1}\#\cdots\#\succ_{n}\right)$$ by:
$(x_{1},\ldots,x_{n})\ \left(\succ_{1}\#\cdots\#\succ_{n}\right)\ (y_{1},\ldots,y_{n})$,
if there exists $i$ such that $x_{i}\succ_{i} y_{i}$ but $x_{j}=y_{j}$ $(j\ne i)$.
Then $\A_{1}\#\cdots\# \A_{n}$ is  a well-founded relation.

If the redex of $t\to t'$ is $\Delta$, we write
$t\stackrel{\Delta}{\to} t'$. Below, ``$\subseteq$''  reads ``is a subterm
occurrence of .''

Following \protect{\cite{GTL}}, we consider:

\begin{definition}[Neutral]\label{def:neutral}
 A term is called \emph{neutral} if it is not of the form
$\pairing{u}{v}$ or $\lam{x}{v}$. \end{definition}

  We state and prove four properties (CR0), (CR1), (CR2) and (CR3) of the
reducibility~(Definition~\ref{cd:reducibility}). Girard verified the last three to prove the SN of $\beta\l\r$-reduction in \cite{GTL}.
  
 \begin{lemma}\label{6.2.3} 
  \begin{description}
   \item[(CR0)]\ If $\c^\varphi$ is defined, then $\c^\varphi$ is reducible.
   \item[(CR1)]\ If $t^\varphi$ is reducible, then $t^\varphi$ is SN.
   \item[(CR2)]\ if $t^\varphi$ is reducible and $t^\varphi\to t'^\varphi$, then $t'^\varphi$ is reducible.
   \item[(CR3)]\ if $t^\varphi$ is neutral, and $t'$ is reducible whenever $t^\varphi\to t'^\varphi$,
	      then
	      $t^\varphi$ is reducible.
\end{description}
 \end{lemma}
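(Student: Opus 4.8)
The plan is to prove (CR0)--(CR3) simultaneously by induction on the structure of the type $\varphi$, assuming all four properties for every proper subtype. A useful preliminary observation is that every canonical term $\c^\varphi$ (for $\varphi\isot$) is closed and normal; this follows by induction on the generation of $\varphi\isot$: $\c^\top=\c$ is normal, $\c^{\varphi\to\tau}=\lam{x}{\c^\tau}$ is canonical with normal body that is not an application (so none of $(\gentop),(\eta),(\etatop)$ apply), and $\c^{\tau_1\times\tau_2}=\pairing{\c^{\tau_1}}{\c^{\tau_2}}$ is canonical with normal components that are not projections (so none of $(\gentop),(SP),(\SPtop1),(\SPtop2)$ apply). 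In particular each $\c^\varphi$ is SN and $\c^\tau[x:=u]\equiv\c^\tau$. Tracking dependencies, each of the four properties at $\varphi$ appeals only to the four properties at proper subtypes, so there is no circularity at a fixed type.

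For the base case let $\varphi$ be atomic. Then reducible means SN, so (CR1) is trivial, (CR2) is closure of $\SN$ under $\to$, and (CR3) is the characterization ``a term is SN iff all its one-step reducts are SN.'' Property (CR0) is vacuous unless $\varphi=\top$, where $\c^\top=\c$ is SN hence reducible. For a product $\varphi\times\psi$, properties (CR1),(CR2),(CR3) are deduced from the component properties exactly as in Girard's proof, using the congruences $t\to t'\Rightarrow\l t\to\l t'$ and $\r t\to\r t'$; the one new point in (CR3) is that, when $\varphi\isot$ (resp. $\psi\isot$), the neutral term $\l t$ (resp. $\r t$) acquires the extra one-step $(\gentop)$-reduct $\c^\varphi$ (resp. $\c^\psi$), reducible by (CR0) at the smaller type. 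For (CR0) at $\tau_1\times\tau_2$ I show $\l\c^{\tau_1\times\tau_2}$ and $\r\c^{\tau_1\times\tau_2}$ reducible by (CR3) at $\tau_i$: since $\c^{\tau_i}$ is normal, the only reduct of $\l\c^{\tau_1\times\tau_2}$ is $\c^{\tau_1}$, reducible by (CR0) at $\tau_1$.

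The arrow case is the heart. Properties (CR1) and (CR2) for $\varphi\to\psi$ follow as in Girard, (CR1) using that a fresh variable $x^\varphi$ is reducible by (CR3) at $\varphi$. For (CR3), let $t$ be neutral with all one-step reducts reducible, and let $u$ be reducible, hence SN by (CR1) at $\varphi$; I show $tu$ reducible by well-founded induction on $u$ ordered by $\to$ (legitimate since $u$ is SN) together with (CR3) at $\psi$. The reducts of $tu$ are: $t'u$ with $t\to t'$ ($t'$ reducible by hypothesis, so $t'u$ reducible); $tu'$ with $u\to u'$ ($u'$ reducible by (CR2) at $\varphi$, so $tu'$ reducible by the inner induction hypothesis); and, since neutrality of $t$ forbids a top-level $(\beta)$-redex, possibly the $(\gentop)$-reduct $\c^\psi$ when $\psi\isot$, reducible by (CR0) at $\psi$. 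For (CR0) at $\varphi\to\tau$, write $\c^{\varphi\to\tau}=\lam{x}{\c^\tau}$; for reducible $u$ I show $(\lam{x}{\c^\tau})u$ reducible by the same inner well-founded induction on $u$ and (CR3) at $\tau$, its reducts being $(\lam{x}{\c^\tau})u'$ (inner induction hypothesis), the $\beta$-reduct $\c^\tau[x:=u]\equiv\c^\tau$, and, when applicable, the $(\gentop)$-reduct $\c^\tau$, all reducible by (CR0) at $\tau$.

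The only genuine departure from the classical reducibility-candidate argument is the rule $(\gentop)$: at every type in $\isomorphicT$ it attaches to each neutral term the extra reduct $\c^\varphi$, and this is exactly what forces (CR0) into the simultaneous induction, since (CR0) is precisely the fact needed to discharge that reduct inside each application of (CR3). By contrast, $(\etatop)$, $(\SPtop1)$ and $(\SPtop2)$ fire only at abstractions and pairings, which are non-neutral and enter the proof only as the (normal) canonical terms in the (CR0) steps, so they never disturb the (CR3) reasoning about neutral terms. I expect the only delicate bookkeeping to be the nested well-founded induction on the argument's reducts, needed both for (CR3) and (CR0) at arrow types; the genuinely problematic use of $(\etatop)$ is deferred to the key statement \eqref{key statement for abstraction} for abstractions (treated via \eqref{naive}), and does not arise here.
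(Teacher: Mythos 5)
Your proposal is correct and follows essentially the same route as the paper: a simultaneous induction on the type establishing (CR0)--(CR3) together, with (CR0) drafted into the induction precisely to discharge the extra $(\gentop)$-reduct $\c^\varphi$ that neutral terms acquire, and a nested well-founded induction on the SN argument $u$ in the application cases of (CR0) and (CR3) at arrow types. The only additions beyond the paper's proof are your explicit verification that canonical terms are normal (which the paper asserts without proof) and the remark on the dependency structure of the induction; neither changes the argument.
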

 To prove Lemma~\ref{6.2.3}, we first note the following:
 \begin{lemma} By (CR0) and
  (CR3), we have  \begin{description}
   \label{varred}
      \item[(CR4)]\ If  $t^\varphi$ is  a variable, then $t^\varphi$ is reducible.
\end{description}
 \end{lemma}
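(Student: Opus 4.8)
The plan is to deduce (CR4) directly from (CR0) and (CR3), exactly as the lemma announces, by a short analysis of which reducts a bare variable can have. Let $t^\varphi$ be a variable. Since a variable is neither of the form $\pairing{u}{v}$ nor of the form $\lam{x}{v}$, it is neutral in the sense of Definition~\ref{def:neutral}. Hence (CR3) applies, and it suffices to check that every $t'$ with $t^\varphi \to t'$ is reducible.

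The key observation is that a variable can be rewritten in at most one way. Inspecting the left-hand sides of all rewrite rule schemata of $\rts$, every schema except $(\gentop)$ requires the outermost constructor of the redex to be an application, a projection, an abstraction, or a pairing; none of these matches a variable, and since a variable has no proper subterms, the closure under contexts contributes nothing further. Thus the only rule that can fire at $t^\varphi$ is $(\gentop)$, and it fires precisely when $\varphi\isot$: a variable is never one of the canonical terms $\c^\top$, $\lam{x}{\c^\tau}$, $\pairing{\c^{\tau_1}}{\c^{\tau_2}}$, so whenever $\varphi\isot$ the term $t^\varphi$ is non-canonical and $(\gentop)$ rewrites it to its canonical term $\c^\varphi$.

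Two cases then finish the argument. If $\varphi\notin\isomorphicT$, then $t^\varphi$ has no reduct at all, so the hypothesis of (CR3) is vacuously met and $t^\varphi$ is reducible. If $\varphi\isot$, then $\c^\varphi$ is its unique reduct, and $\c^\varphi$ is reducible by (CR0); again (CR3) yields that $t^\varphi$ is reducible. I do not expect a serious obstacle here; the only point that needs care---and the reason (CR0) is genuinely needed, in contrast to Girard's pure $\beta\l\r$ setting where variables are normal---is that over $\rts$ a variable of a type isomorphic to $\top$ is not normal but reduces via $(\gentop)$ to its canonical term.
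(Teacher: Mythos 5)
Your proposal is correct and follows essentially the same route as the paper: the paper's proof likewise observes that any reduct of a variable must be the canonical term (obtained via $(\gentop)$), hence reducible by (CR0), and then concludes by (CR3). Your version merely spells out the case analysis (no reduct when $\varphi\notin\isomorphicT$, unique reduct $\c^\varphi$ otherwise) that the paper leaves implicit.
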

  \begin{proof}Let $t\to t'$. Then $t'$ is canonical, since $t$ is variable. By (CR0), $t'$ is reducible. 
By (CR3), $t$ is too.\end{proof}

   \medskip
   \noindent
   \emph{Proof of Lemma~\ref{6.2.3}}. By induction on $\varphi$.
  \begin{description}
   \item[$\varphi$ is atomic.]
   
   (CR0) $\c^\varphi$ is $\c^{\top}$, and SN. So $\c^\varphi$ is reducible.

	 (CR1) is clear. (CR2) As $t$ is SN,  so is
	 every reduct $t'$ of $t$.
	      (CR3)
If all reducts of $t$ are SN, then $t$ is SN.

   \item[$\varphi=\varphi_{1}\times\varphi_2$.]
   
    (CR0) As $\c^{\varphi_1\times\varphi_2}$ is a normal form $\pairing{\c^{\varphi_1}}{\c^{\varphi_2}}$, the reduct of $\pi_{i}\c^{\varphi_1\times\varphi_2}$ is $\c^{\varphi_{i}}$, which is reducible by \ih0. By \ih3 for $\varphi_{i}$, $\pi_{i} \c^{\varphi_1\times\varphi_2}$ is reducible. Hence $\c^{\varphi_1\times\varphi_2}$ is reducible.

\medskip
	 (CR1) Suppose that $t$ is reducible. Then $\pi_{i} t$ is reducible. By \ih1 for $\varphi_{i}$, $\pi_{i} t$ is SN.  So $t$ is SN.

\medskip
	 (CR2) If $t\to t'$, then $\pi_{i} t\to \pi_{i} t'$. As $t$ is reducible by hypothesis, so are $\pi_{i} t$. By  \ih2 for $\varphi_{i}$,  $\pi_{i} t'$ is reducible, and so $t'$ is reducible.

\medskip
	 (CR3)  Let $\pi_i t \stackrel{\Delta}{\to} s$. We have two cases.
	 \begin{enumerate}
	 \item $\Delta\equiv \pi_i t$ and
	 $s\equiv\c^{\varphi_{i}}$: By \ih0 for
	 $\varphi_{i}$, $s$ is reducible. 
	 \item Otherwise,
	 $s\equiv \pi_i t'$ for some $t'$ such that $t\to t'$.   $s$ is reducible, because $t'$ is reducible by the
	 hypothesis.
	   $\pi_i t$ is neutral, and all the terms $s$ with
	 $\pi_{i} t\to s$ are reducible. By \ih3 for $\varphi_{i}$,  $\pi_i
	 t$ is reducible. Hence $t$ is reducible.
	\end{enumerate}

   \item[$\varphi=\varphi_{1}\to\varphi_2$.]
   
   (CR0) Let $u$ be a reducible term of type $\varphi_{1}$. By \ih1 for $\varphi_{1}$, $u$ is SN.
So $\wfi{\left(\left\{u^{\varphi_1}\mid u^{\varphi_1}\ \mbox{is reducible}\right\},\ \to\right)}$ is available, where $\to$ is the rewrite relation.
    We will verify that $\c^{\varphi_{1}\to\varphi_{2}} u$ is reducible. Suppose $\c^{\varphi_{1}\to\varphi_{2}} u\stackrel{\Delta}{\to} s$. 
   As  $\c^{\varphi_{1}\to\varphi_{2}}$ is in normal form, we have two cases.
   \begin{enumerate}
   \item $\Delta\equiv \c^{\varphi_1\to\varphi_2}u$: Then $s\equiv \c^{\varphi_{2}}$ is reducible by \ih0 for $\varphi_{2}$.
    \item Otherwise, $s\equiv \c^{\varphi_{1}\to\varphi_{2}}u'$ with $u\to u'$. Then $u'$ is reducible
		by \ih2 for $\varphi_{1}$. So, by the WF
		induction hypothesis, $s\equiv \c^{\varphi_{1}\to\varphi_{2}} u'$ is
		reducible.
\end{enumerate}
In any case,  the neutral term $\c^{\varphi_{1}\to\varphi_{2}} u$
	 rewrites to reducible terms only. By \ih3 for $\varphi_2$,
	 $\c^{\varphi_{1}\to\varphi_{2}} u$ is reducible. So $\c^{\varphi_{1}\to\varphi_{2}}$ is reducible.
	 
\medskip
	 (CR1) By \ih4, a variable $x^{\varphi_{1}}$ is reducible. So $t x$ is
	 reducible.
 Hence $t$ is SN.

\medskip
	 (CR2) Let $u$ be a reducible term of
	 type $\varphi_{1}$. Then $t u$ is reducible and $t u \to t' u$. By the
	 \ih2 for $\varphi_2$, $t' u$ is reducible. So $t'$ is
	 reducible.

\medskip
	 (CR3) Assume that $t$ is neutral and that all the reducts $t'$ of $t$ are
	 reducible. Let $u$ be a reducible term of type $\varphi_{1}$. By
	 \ih1 for $\varphi_{1}$, $u$ is SN. 
	 So by
	 $\wfi{\left(\left\{u^{\varphi_1}\mid u^{\varphi_1}\ \mbox{is reducible}\right\},\ \to\right)}$, we will verify that $t u$ is reducible.

	 Suppose $t u\stackrel{\Delta}{\to} s$. We will show that $s$ is reducible. As $t$ is neutral, we have three cases.
	 \begin{enumerate}
	  
	  \item $\Delta\equiv t u$:  Then,  $s\equiv \c^{\varphi_2}$  is reducible, by \ih0 for $\varphi_2$.   
		
	  \item $\Delta\subseteq t$: Then, $s\equiv t' u$ with $t\to t'$. $t' u$ is reducible,
		because $t'$ is by the assumption, 

	  \item Otherwise, $s\equiv t u'$ with $u\to u'$. Then, $u'$ is reducible
		by \ih2 for $\varphi_{1}$.
		So, by the WF
		induction hypothesis, $t u'$ is
		reducible.
	 \end{enumerate}

	 In any case,  the neutral term $t u$
	 rewrites to reducible terms only. By \ih3 for $\varphi_2$,
	 $t u$ is reducible. So $t$ is reducible.  This
completes the proof of Lemma~\ref{6.2.3}.\qed
\end{description}

The following lemma suggests that the rewrite rule schemata $(\etatop), (\SPtop{})$ require the relativization of 
the key statement for pairing and that of $\lambda$-abstraction in the reducibility method.

\begin{lemma}\label{lem:weakened sufficient condition T}
   \begin{enumerate}
   \item \label{pairing}  Let $u^\varphi,v^\psi$ be any terms.
   $\pairing{u^\varphi}{v^\psi}$ is reducible, provided that
	 \begin{enumerate}
	  \item \label{prod1}
		$u$ and $v$ are both reducible;

	  \item \label{prod2}if
		$u\equiv \l w$ and $v \equiv \c^\psi$,
		then $w$ is reducible; and 
	  \item \label{prod3} if
		$v\equiv \r w$ and $u \equiv \c^\varphi$,  then $w$ is reducible.
\end{enumerate}

   \item \label{lambda} Let $v^\psi$ be any term.
   $\lam{x^\varphi}{v^\psi}$ is reducible, provided that
	 \begin{enumerate}
	  \item \label{arrow1}
 $v^\psi[x^\varphi:=u^\varphi]$ is reducible for every reducible, \emph{possibly non-$\c$-free} term $u^\varphi$; and

	  \item \label{arrow2} if $v\equiv w^{\varphi\to\psi} \c^\varphi$
		and $x^\varphi\notin \FV(w^{\varphi\to\psi})$,
		then $w^{\varphi\to\psi}$ is reducible.
		      \end{enumerate}
  \end{enumerate} \end{lemma}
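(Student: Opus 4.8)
The plan is to prove both parts by the standard Girard-style technique of showing that all one-step reducts of the candidate term are reducible and then invoking the neutrality-based closure property (CR3), except that the terms $\pairing{u}{v}$ and $\lam{x}{v}$ are \emph{not} neutral, so we cannot apply (CR3) directly. Instead, I would unfold the definition of reducibility at the relevant type: for part~\eqref{pairing} I must show $(\l\pairing{u}{v})^\varphi$ and $(\r\pairing{u}{v})^\psi$ are reducible, and for part~\eqref{lambda} I must show $((\lam{x}{v})a)^\psi$ is reducible for every reducible $a^\varphi$. Each of these target terms \emph{is} neutral, so (CR3) becomes available on them, and the argument reduces to analysing their one-step reducts by cases on where the redex $\Delta$ lies.

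For part~\eqref{pairing}, I would first note that by hypothesis~\eqref{prod1} both $u$ and $v$ are reducible, hence SN by (CR1), so a well-founded induction on $(\SN,\to)\#(\SN,\to)$ (using the pair $(u,v)$ via $\wfi{\cdot}$) is available. To show $\l\pairing{u}{v}$ is reducible, I apply (CR3) and examine each reduct: the head redex gives the rule~$(\l)$ contracting $\l\pairing{u}{v}\to u$, which is reducible by~\eqref{prod1}; the rule $(\SPtop1)$ applies only when $v\equiv\c^\psi$ and $u\equiv\l w$, contracting $\pairing{\l w}{\c^\psi}\to w$ inside, and then reducibility of $w$ is exactly what hypothesis~\eqref{prod2} supplies; a reduct inside $u$ or $v$ is handled by (CR2) together with the WF induction hypothesis. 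The symmetric analysis using~\eqref{prod3} and $(\SPtop2)$ handles $\r\pairing{u}{v}$, and the non-canonical $(\gentop)$ reducts land on a $\c$-term which is reducible by (CR0).

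For part~\eqref{lambda}, let $a^\varphi$ be an arbitrary reducible term; by (CR1) $a$ is SN, so I set up a well-founded induction on $(\SN,\to)\#(\SN,\to)$ ranging over the pair $(v,a)$, both of which are SN (that $v$ is SN follows from applying~\eqref{arrow1} with $u$ a fresh variable, which is reducible by (CR4), hence $v[x:=x]\equiv v$ is reducible and thus SN by (CR1)). Since $(\lam{x}{v})a$ is neutral, I apply (CR3) and classify its reducts: the $(\beta)$-redex gives $v[x:=a]$, reducible directly by~\eqref{arrow1}; a reduct inside $v$ or inside $a$ is handled by the WF induction hypothesis combined with~\eqref{arrow1} and (CR2); and the $(\gentop)$ reduct to a canonical term is reducible by (CR0). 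The delicate case is $(\etatop)$: this rule fires only when $v\equiv w\,\c^\varphi$ with $x\notin\FV(w)$, contracting $(\lam{x}{w\,\c^\varphi})a\to w\,a$, and then reducibility of $w\,a$ follows because hypothesis~\eqref{arrow2} gives that $w$ is reducible while $a$ is reducible by assumption.

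I expect the $(\etatop)$ case in part~\eqref{lambda} to be the main obstacle, and indeed it is precisely the case that forced the extra hypotheses~\eqref{prod2}, \eqref{prod3}, and~\eqref{arrow2} into the statement: the ordinary key statement~\eqref{key statement for abstraction} fails exactly because $(\etatop)$ can rewrite $(\lam{x}{v})a$ to a term $w\,a$ that is \emph{not} an instance of $v[x:=a]$. The care needed is to verify that the side condition of $(\etatop)$, namely $v\equiv w\,\c^\varphi$ with $x\notin\FV(w)$, matches the antecedent of~\eqref{arrow2} syntactically, so that the supplied reducibility of $w$ is genuinely applicable; the corresponding matching for $(\SPtop1)$, $(\SPtop2)$ against~\eqref{prod2}, \eqref{prod3} in part~\eqref{pairing} is the analogous, but slightly easier, bookkeeping.
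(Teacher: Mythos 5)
Your proposal is correct and takes essentially the same route as the paper's proof: unfold the definition of reducibility at the product and arrow types, apply (CR3) to the neutral terms $\l\pairing{u}{v}$, $\r\pairing{u}{v}$ and $(\lam{x}{v})a$, run a well-founded induction on the SN components supplied by (CR1), and discharge the $(\SPtop1)$, $(\SPtop2)$ and $(\etatop)$ reducts with the extra hypotheses \eqref{prod2}, \eqref{prod3} and \eqref{arrow2}. The only bookkeeping you gloss over is that the $(SP)$-reduct $\l w\equiv u$ (covered by \eqref{prod1}) and \emph{both} $(\SPtop1)$ and $(\SPtop2)$ can fire in the pair under either projection, and that the $(\eta)$-reduct of $\lam{x}{v}$ collapses to the $(\beta)$ case; none of these requires a new idea.
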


  \begin{proof}
  (1) By the premise  and
  (CR1), $u$ and $v$ are both SN. We can use
  \begin{align}\wfi{\left(\left\{u^{\varphi}\mid u^{\varphi}\ \mbox{is reducible}\right\},\ \to\right)\#\left(\left\{v^{\psi}\mid v^{\psi}\ \mbox{is reducible}\right\},\ \to\right)} \label{wf:prod}
  \end{align}
  where $\to$ is the rewrite relation.
We will verify
  that $\l \pairing{u}{v}$ is reducible. Let $\l \pairing{u}{v}\stackrel{\Delta}{\to}  s$. We will prove that $s$ is reducible, by case analysis.  We will exhaust the positions of the redexes $\Delta$ in $\pi_1\pairing{u}{v}$ from left to right, and the rewrite rule schemata of $\stackrel{\Delta}{\to}$.
 We have eight cases.
  \begin{enumerate}
   \item  $\Delta\equiv \l\pairing{u}{v}$ is a redex and $s\equiv\c^{\varphi}$: Then
	  $s$ is reducible by (CR0).

   \item $\Delta\equiv \l\pairing{u}{v}$ is a redex of the rewrite rule
	 ($\l$) and $s\equiv u$: Then $s$ is reducible by the hypothesis~\eqref{prod1}.

   \item $\Delta\equiv \pairing{u}{v}$ is a redex and
	$s\equiv \l (\c^{\varphi\times\psi})$: Then
	  $\c^{\varphi\times\psi}$ is reducible by (CR0).
By the definition of the reducibility for the product type, 
	 $s\equiv\l (\c^{\varphi\times\psi})$   is reducible.

   \item $\Delta\equiv \pairing{u}{v}$ is a redex of ($SP$) and
	 $s\equiv \l w$: Then
	  $u\equiv\l w$ and $v\equiv \r w$. $s\equiv\l w$ is
	 reducible by the hypothesis~\eqref{prod1}
	 
  \item $\Delta\equiv \pairing{u}{v}$ is a redex of ($\SPtop1$) and
	 $s\equiv \l w$: Then
	  $u\equiv\l w$ and $v\equiv \c^\psi$. $s\equiv\l w$ is
	 reducible by the hypothesis~\eqref{prod2}

   \item $\Delta\equiv \pairing{u}{v}$ is a redex of ($\SPtop2$) and
	 $s\equiv \l w$: Then
	  $v\equiv\r w$ and $u\equiv \c^\varphi$. $s\equiv\l w$ is
	 reducible by the hypothesis~\eqref{prod3}

   \item $\Delta\subseteq u$: Then $s\equiv \l \pairing{u'}{v}$ with
	 $u\to u'$.  $u'$ is
	 reducible by \eqref{prod1} and (CR2). By the WF
	 induction hypothesis,  $s\equiv \l \pairing{u'}{v}$ is reducible.

   \item $\Delta\subseteq v$: Then  $s\equiv \l\pairing{u}{v'}$ with
	 $v\to v'$. $v'$ is reducible by \eqref{prod1} and (CR2). By the WF
	 induction hypothesis,  $s\equiv \l \pairing{u}{v'}$ is reducible.
  \end{enumerate}
  In every case, the neutral term $\l\pairing{u}{v}$ rewrites to
  reducible terms only, and by (CR3), $\pi_1\pairing{u}{v}$ is reducible. We can similarly
  prove that $\r
  \pairing{u}{v}$ is reducible.  So $\pairing{u}{v}$ is reducible.

  \medskip
  (2) By (CR4), $x^{\varphi}$ is reducible. So  $v^{\psi}$ is, by the premise~\eqref{arrow1}.
  Let $u^{\varphi}$ be a reducible, possibly non-$\c$-free term.
    By
  (CR1),  both of $u,v$ are SN.  By the well-founded induction~\eqref{wf:prod}, 
    we will verify that $(\lam{x}{v})u$ is reducible.
     Assume $(\lam{x}{v})u\stackrel{\Delta}{\to} s$.  
     We will exhaust the positions of the redex $\Delta$ in $(\lam{x}{v})u$ from left to right, and the rewrite rule schemata of $\stackrel{\Delta}{\to}$.
         Then we have seven cases:
  \begin{enumerate}
   \item   $\Delta\equiv (\lam{x}{v}) u$ is a redex and $s\equiv\c^{\psi}$: Then
	  $s$ is reducible by (CR0).

 \item $\Delta\equiv (\lam{x}{v}) u$ is a redex of ($\beta$) and $s\equiv v[x:=u]$: Then $s$ is reducible by hypothesis~\eqref{arrow1}.

   \item $\Delta\equiv \lam{x}{v}$ is a redex and $s\equiv \c^{\varphi\to\psi} u$:  As $\c^{\varphi\to\psi}$ is reducible
	 by (CR0), so is $s$.
	 
 \item $\Delta\equiv \lam{x}{v}$ is a redex of ($\eta$) and $s\equiv v[x:=u]$:  Then,  this case is case~2.

   \item $\Delta\equiv \lam{x}{v}$ is a redex of ($\etatop$) and $s\equiv w u$ with $v \equiv w \c^\varphi$ and $x\notin\FV(w)$:  Then, since $w$ is reducible by hypothesis~\eqref{arrow2}, $s\equiv w u$ is reducible.

     \item $\Delta\subseteq v$ and $s\equiv (\lam{x}{v'}) u$ with $v\to v'$: Then, by (CR2), $v'$ is
	 reducible.  By the WF induction hypothesis,
$s\equiv(\lam{x}{v'}) u$ is reducible.
	 
\item $\Delta\subseteq u$ and $s\equiv (\lam{x}{v}) u'$ with $u\to u'$: Then, by (CR2), $u'$ is reducible.
   By the WF induction hypothesis,
$s\equiv (\lam{x}{v}) u'$ is reducible.
\end{enumerate}
  
In every case, the neutral term $(\lam{x}{v}) u$ reduces to reducible
terms only. So, by (CR3),  $(\lam{x}{v}) u$ is reducible. Hence $\lam{x}{v}$ is reducible.
 \end{proof}

 \begin{corollary}\label{cor:product} If $u^\varphi,v^\psi$ are reducible and $\c$-free, then so is $\pairing{u}{v}$.\end{corollary}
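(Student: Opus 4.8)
The plan is to verify the two conjuncts of the conclusion separately, since ``so is $\pairing{u}{v}$'' asserts both reducibility and $\c$-freeness. That $\pairing{u}{v}$ is $\c$-free is immediate from Definition~\ref{star-free}: the pairing constructor introduces no occurrence of $\c^\top$, so if neither $u$ nor $v$ contains $\c^\top$, then neither does $\pairing{u}{v}$.

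For reducibility I would invoke Lemma~\ref{lem:weakened sufficient condition T}~\eqref{pairing}. Its hypothesis~\eqref{prod1} holds by assumption, so it remains only to check hypotheses~\eqref{prod2} and~\eqref{prod3}, and the key point is that both are \emph{vacuously} true. A routine induction on the definition of canonical terms shows that every canonical term $\c^\varphi$ contains at least one occurrence of $\c^\top$: the base case is $\c^\top$ itself, while $\c^{\varphi\to\tau}\equiv\lam{x^\varphi}{\c^\tau}$ and $\c^{\tau_1\times\tau_2}\equiv\pairing{\c^{\tau_1}}{\c^{\tau_2}}$ each literally contain a strictly smaller canonical subterm, which inherits an occurrence of $\c^\top$ by induction. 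Hence no $\c$-free term can be canonical. Since $v$ is $\c$-free, the premise $v\equiv\c^\psi$ of~\eqref{prod2} is never met; since $u$ is $\c$-free, the premise $u\equiv\c^\varphi$ of~\eqref{prod3} is never met. Both implications therefore hold trivially, and Lemma~\ref{lem:weakened sufficient condition T}~\eqref{pairing} yields that $\pairing{u}{v}$ is reducible.

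There is essentially no obstacle here. The one thing worth noticing is that the $\c$-freeness hypothesis is exactly what switches off the two canonical-term side conditions arising from $(\SPtop1)$ and $(\SPtop2)$ that Lemma~\ref{lem:weakened sufficient condition T}~\eqref{pairing} was designed to absorb, so the general sufficient condition collapses to its trivial instance and the corollary follows at once.
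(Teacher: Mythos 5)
Your proposal is correct and is exactly the argument the paper intends: the corollary is stated without proof immediately after Lemma~\ref{lem:weakened sufficient condition T}, and the intended derivation is precisely to apply its part~\eqref{pairing} with hypothesis~\eqref{prod1} given and hypotheses~\eqref{prod2} and~\eqref{prod3} vacuous because a $\c$-free term cannot be canonical (every canonical term contains an occurrence of $\c^\top$). Your explicit induction justifying that last observation, and your remark that pairing introduces no new occurrence of $\c^\top$, correctly fill in the two details the paper leaves implicit.
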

  
\begin{lemma}\label{lem:rel} 
 \begin{enumerate}
 \item \label{assert:term subst T} A $\c$-free term with the variables substituted by $\c$-free terms is $\c$-free.

 \item 
 \label{red rel} Suppose that $t$ is a term and $z^\top$ is a variable not occurring in $t$.
 Then
 \begin{enumerate}
 \item \label{assert red rel}
if $t$ is reducible, so is $  t[\c^\top:=z^\top]$. 
 
 \item \label{rel} $t [ \c^\top:=z^\top ]\rtc t$.
 \end{enumerate}
  \end{enumerate}
\end{lemma}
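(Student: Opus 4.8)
The plan is to dispatch the three assertions in increasing order of difficulty, isolating the strong normalization content of \eqref{essence} as the only delicate point. Assertion~\ref{assert:term subst T} is a routine structural induction: substituting $\c$-free terms for variables introduces no new $\c$. Assertion~\ref{rel} is immediate from the schema $(\gentop)$, since each $z^\top$ put in place of a $\c^\top$ is a non-canonical term of type $\top\isot$, hence $z^\top\to\c^\top$; contracting every such occurrence and closing under contexts turns $t[\c^\top:=z^\top]$ back into $t$. The same mechanism, applied to an arbitrary term $w$, shows $w[\c^\top:=z^\top]\rtc w$ by contracting precisely the $z^\top$'s that replaced a $\c^\top$, \emph{without} requiring $z^\top\notin w$; dually, contracting \emph{all} occurrences of $z^\top$ in any $u$ gives $u\rtc u[z^\top:=\c^\top]$, so by (CR2) one obtains the reusable fact (A): if $u$ is reducible then $u[z^\top:=\c^\top]$ is reducible.

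For assertion~\ref{assert red rel} I would induct on the type $\varphi$ of $t$, writing $t':=t[\c^\top:=z^\top]$ and keeping $z^\top\notin t$. When $\varphi$ is atomic, reducible means $\SN$, so the task is exactly \eqref{essence}: $t\in\SN\Rightarrow t'\in\SN$. I would project along $\theta\colon z^\top\mapsto\c^\top$, for which $t'\theta=t$, and verify that every one-step reduct $t'\to s$ satisfies $t'\theta\to s\theta$ or $t'\theta=s\theta$. All of $(\beta),(\l),(\r),(\eta),(SP),(\etatop),(\SPtop1),(\SPtop2)$ and every non-degenerate $(\gentop)$-step project to a genuine step; the only steps collapsing to an equality are $(\gentop)$-contractions $u^\tau\to\c^\tau$ whose subject becomes canonical under $\theta$, which forces $u$ to be $\c^\tau$ with at least one leaf $\c^\top$ replaced by $z^\top$. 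An infinite reduction of $t'$ would then, after deleting the collapsing steps, project to an infinite reduction of $t=t'\theta$ (contradicting $t\in\SN$) unless it is eventually all collapsing; but each collapsing step strictly decreases the number of occurrences of $z^\top$, which cannot happen infinitely often. Hence $t'\in\SN$.

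The two compound cases then reduce to the induction hypothesis. For $\varphi=\varphi_1\times\varphi_2$, one has $\l t'=(\l t)[\c^\top:=z^\top]$ and $\r t'=(\r t)[\c^\top:=z^\top]$, while $\l t,\r t$ are reducible with $z^\top$ absent, so the hypothesis at $\varphi_i$ makes $\l t',\r t'$ reducible, whence $t'$ is reducible. For $\varphi=\varphi_1\to\varphi_2$, let $u^{\varphi_1}$ be reducible and put $u'':=u[z^\top:=\c^\top]$, which is reducible by (A) and free of $z^\top$; then $tu''$ is reducible of type $\varphi_2$ with $z^\top\notin tu''$, so the hypothesis gives $(tu'')[\c^\top:=z^\top]$ reducible. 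A short computation of the nested substitutions yields $(tu'')[\c^\top:=z^\top]=t'\,\bigl(u[\c^\top:=z^\top]\bigr)$, and since $u[\c^\top:=z^\top]\rtc u$ we get $t'\,\bigl(u[\c^\top:=z^\top]\bigr)\rtc t'u$; by (CR2), $t'u$ is reducible, and therefore so is $t'$.

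The only subtle step is the atomic case, that is \eqref{essence}: the bookkeeping of which $(\gentop)$-steps survive the projection $\theta$ and the observation that the degenerate ones strictly consume occurrences of $z^\top$. Everything else is a direct manipulation of substitutions together with (CR0)--(CR3) of Lemma~\ref{6.2.3} and the closure fact (A).
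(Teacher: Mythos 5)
Your proof is correct and follows essentially the same route as the paper: part (1) by structural induction, part (2b) by $\gentop$-contracting the inserted $z^\top$'s, and part (2a) by induction on the type, with the atomic case handled by projecting an alleged infinite reduction of $t[\c^\top:=z^\top]$ along $z^\top\mapsto\c^\top$ and observing that the steps which collapse to equalities are $\gentop$-steps that cannot recur forever. Your write-up is in fact slightly more careful than the paper's at two points: you identify \emph{all} collapsing $\gentop$-redexes (not only the redex $z^\top$ itself) and bound them by the number of occurrences of $z^\top$, and in the $\varphi_1\to\varphi_2$ case your auxiliary fact (A) lets you handle reducible arguments that themselves contain $z^\top$, a case the paper's wording glosses over.
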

 \begin{proof}
 \eqref{assert:term subst T}  Trivial.
 
\eqref{red rel} 
\eqref{rel} is trivial.

\eqref{assert red rel} By induction on the type $\varphi$ of $t$. 
 \begin{itemize}
     \item $\varphi$ is atomic:

Assume that 
                       $t [ \c^\top:=z^\top ]$ is not reducible. By the definition, $t$ is SN but 
                       there are $  t [ \c^\top:=z^\top ] \equiv s_{0}, s_{1},s_{2},\ldots$ such that
                     $s_{i}\to s_{i+1}$. Then $s_{i}[ z^\top:=*^\top]\equiv s_{i+1}[ z^\top:=*^\top]$ or 
                     $s_{i}[ z^\top:=*^\top]\to s_{i+1}[ z^\top:=*^\top]$. The former happens if $s_i\to_\gentop s_{i+1}$ with the redex being $z^\top$. If $\{s_{i}[ z^\top:=*^\top]\}_{i}$ is finite, then
                     for any $i$ but finitely many, $s_{i}\to_\gentop s_{i+1}$. However,  $\to_\gentop $ is SN, because $\to_\gentop$ reduces the length of terms or the number of non-$*$ variables. 
  Since $z$ is a fresh variable,  $t\equiv s_{0}[z:=*]$. Hence, $t$ is not SN. This contradicts the reducibility of $t$.

    \item $\varphi=\varphi_1\to\varphi_2$:
  
  As $t^{\varphi_1\to\varphi_2}$ is reducible, $(t u)^{\varphi_2}$ is so for every reducible $u^{\varphi_1}$. 
  $z^\top$ does not occur in $t u$. So, by induction hypothesis on $\varphi_2$, $( t u ) [ \c^\top:=z^\top ] \equiv t [ \c^\top:=z^\top ]  u [ \c^\top:=z^\top ]$ is reducible.
  By \eqref{rel},  $u [ \c^\top:=z^\top ] \rtc u$. 
  So, $( t u ) [ \c^\top:=z^\top ] \rtc t  [ \c^\top:=z^\top ] u$. By (CR2), $t  [ \c^\top:=z^\top ] u$ is reducible.
 Hence $t [ \c^\top:=z^\top ]$ is reducible.
  
      \item $\varphi=\varphi_1\times \varphi_2$:
  
  As $t^{\varphi_1\to\varphi_2}$ is reducible, $\pi_i t^{\varphi_i}$ is so for each $i=1,2$.
  $z$ does not occur in any of $\pi_i t$. 
  So,  by induction hypothesis on $\varphi_i$, $(\pi_i t) [ \c^\top:=z^\top ] \equiv \pi_i (t [ \c^\top:=z^\top ] )$ is reducible. 
Hence $t[*^\top:=z^\top]$ is reducible.
\end{itemize}

\eqref{rel} Just contract each occurrence of $z^\top$ to $*^\top$.
\end{proof}

\begin{lemma}\label{key statement for arrow}Given a $\c$-free $v^\psi$. If $v[x^\varphi:=u^\varphi]$ is reducible for every reducible $\c$-free $u^\varphi$, then $\lam{x^\varphi}{v^\psi}$ is reducible and $\c$-free.\end{lemma}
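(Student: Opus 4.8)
The plan is to derive the reducibility of $\lam{x^\varphi}{v^\psi}$ from Lemma~\ref{lem:weakened sufficient condition T}~\eqref{lambda}, whose two premises \eqref{arrow1} and \eqref{arrow2} I would verify in turn. The $\c$-freeness of $\lam{x}{v}$ is immediate, since $\lambda$-abstraction introduces no occurrence of $\c^\top$, so $\lam{x}{v}$ is $\c$-free as soon as $v$ is. Thus the whole content of the lemma lies in establishing reducibility.

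Premise \eqref{arrow2} I would dispose of as vacuously true. Its hypothesis is that $v \equiv w^{\varphi\to\psi}\,\c^\varphi$ for some $w$ with $x\notin\FV(w)$. But every canonical term $\c^\varphi$ (with $\varphi\isot$) is built from $\c^\top$ and hence contains an occurrence of the constant $\c^\top$; so $v\equiv w\,\c^\varphi$ would force $\c^\top$ to occur in $v$, contradicting the assumed $\c$-freeness of $v$. Hence the hypothesis of \eqref{arrow2} never holds, and the premise is satisfied trivially.

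The real work is premise \eqref{arrow1}: I must upgrade the given hypothesis, namely reducibility of $v[x:=u]$ for reducible \emph{$\c$-free} $u$, to reducibility of $v[x:=u]$ for \emph{every} reducible $u^\varphi$, $\c$-free or not. This is exactly the relativization step~\eqref{important lemma}. Given an arbitrary reducible $u^\varphi$, I would choose a variable $z^\top$ not occurring in $u$ or $v$ and set $u' := u[\c^\top:=z^\top]$. Then $u'$ is $\c$-free, and it is reducible by Lemma~\ref{lem:rel}~\eqref{assert red rel}; so the hypothesis of the present lemma applies and yields that $v[x:=u']$ is reducible. Since $v$ is $\c$-free, every occurrence of $\c^\top$ in $v[x:=u]$ comes from $u$, whence $v[x:=u'] \equiv (v[x:=u])[\c^\top:=z^\top]$. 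By Lemma~\ref{lem:rel}~\eqref{rel} this term satisfies $(v[x:=u])[\c^\top:=z^\top]\rtc v[x:=u]$, i.e.\ $v[x:=u']\rtc v[x:=u]$; iterating (CR2) along this reduction transfers reducibility from $v[x:=u']$ to $v[x:=u]$. This verifies \eqref{arrow1}, and Lemma~\ref{lem:weakened sufficient condition T}~\eqref{lambda} then delivers reducibility of $\lam{x}{v}$.

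The delicate point, and the only place where the $\c$-free relativization is genuinely used, is the identity $v[x:=u']\equiv(v[x:=u])[\c^\top:=z^\top]$ together with the fact that $u'$ is simultaneously $\c$-free \emph{and} reducible; the latter is precisely the property~\eqref{essence} packaged in Lemma~\ref{lem:rel}. Everything else is bookkeeping. I would take care to pick $z^\top$ genuinely fresh (not free in $u$, not free in $v$, and distinct from the bound variables of $v$), so that no capture occurs and so that the round trip $u[\c^\top:=z^\top]$ followed by the reduction of $z^\top$ back to $\c^\top$ recovers the original substitution.
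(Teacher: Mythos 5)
Your proposal is correct and follows essentially the same route as the paper: both upgrade the hypothesis from $\c$-free reducible substituends to arbitrary reducible ones by passing to $u[\c^\top:=z^\top]$, which is $\c$-free and reducible by Lemma~\ref{lem:rel}~\eqref{red rel} and reduces back to $u$, then apply (CR2) and Lemma~\ref{lem:weakened sufficient condition T}~\eqref{lambda}. Your explicit check that premise~\eqref{arrow2} is vacuous for $\c$-free $v$ is a point the paper leaves implicit, but it is the right observation.
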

\begin{proof}
 Let $w^\varphi$ be a reducible term. By Lemma~\ref{lem:rel}~\eqref{red rel}, there is a $\c$-free reducible term $u$ such that $u\stackrel{*}{\to} w$. By the premise, $v[x:=u]$ is reducible.  
 Because of $v[x:=u]\stackrel{*}{\to} v[x:=w]$, $(CR2)$ implies that $v[x:=w]$ is reducible. By Lemma~\ref{lem:weakened sufficient condition T}~\eqref{lambda}, $\lam{x^\varphi}{v^\psi}$ is reducible.
\end{proof}
 
 In the following two theorems, we use Lemma~\ref{lem:rel}.
  
 \begin{theorem}[Relativized Reducibility]\label{prop}
Assume that 
\begin{enumerate}
 \item \label{star free}
$t$ is a \emph{$\c$-free} term; 
\item a sequence of distinct variables  $x_1^{\varphi_1},\ldots, x_n^{\varphi_n}$
 contains all free variables of $t$; and 
\item \label{ured} $u_i^{\varphi_i}$ is reducible
 and $*$-\emph{free} $(i=1,\ldots,n)$.
\end{enumerate}
Then  $t[x_1^{\varphi_1},\ldots, x_n^{\varphi_n}:=u_1^{\varphi_1},\ldots, u_n^{\varphi_n}]$ is reducible.
\end{theorem}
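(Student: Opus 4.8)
The plan is to argue by structural induction on the $\c$-free term $t$, abbreviating the simultaneous substitution $[x_1,\ldots,x_n:=u_1,\ldots,u_n]$ by $\sr$. Throughout, hypothesis~\eqref{ured} supplies that each $u_i$ is reducible and $\c$-free, so by Lemma~\ref{lem:rel}~\eqref{assert:term subst T} every instance $t'\sr$ of a $\c$-free subterm $t'$ is again $\c$-free; this keeps the relativized lemmas applicable at each step. Since $t$ is $\c$-free, the term-former case $t\equiv\c$ never arises, so the five remaining cases (variable, application, projection, pairing, abstraction) are exhaustive.

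First I would dispatch the routine cases. If $t$ is a variable, then it lies among $x_1,\ldots,x_n$, so $t\sr\equiv u_i$ is reducible by~\eqref{ured}. If $t\equiv t_1 t_2$, the induction hypothesis makes $t_1\sr$ and $t_2\sr$ reducible, and as $t_1$ carries a function type the clause $(\to)$ of Definition~\ref{cd:reducibility} gives that $(t_1 t_2)\sr\equiv t_1\sr\, t_2\sr$ is reducible. If $t\equiv\l t'$ or $t\equiv\r t'$, then $t'\sr$ is reducible by the induction hypothesis and clause $(\times)$ makes $\l(t'\sr)$ and $\r(t'\sr)$ reducible. For a pairing $t\equiv\pairing{t_1}{t_2}$, the induction hypothesis yields reducible $t_1\sr,t_2\sr$, which are $\c$-free by Lemma~\ref{lem:rel}~\eqref{assert:term subst T}, so Corollary~\ref{cor:product} directly gives reducibility of $\pairing{t_1\sr}{t_2\sr}$.

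The abstraction case $t\equiv\lam{x^\varphi}{t'^\psi}$ is the crux, and it is precisely where the naive method fails because of $(\etatop)$. After renaming $x$ so that it is distinct from the $x_i$ and not free in any $u_i$, we have $t\sr\equiv\lam{x}{(t'\sr)}$ with $t'\sr$ being $\c$-free. I then apply Lemma~\ref{key statement for arrow} with $v\equiv t'\sr$: it remains only to show that $(t'\sr)[x:=w]$ is reducible for every reducible \emph{$\c$-free} $w^\varphi$. But $(t'\sr)[x:=w]\equiv t'[x_1,\ldots,x_n,x:=u_1,\ldots,u_n,w]$, where $x_1,\ldots,x_n,x$ contains all free variables of the $\c$-free subterm $t'$ and $u_1,\ldots,u_n,w$ are reducible and $\c$-free; hence the induction hypothesis applied to $t'$ delivers reducibility, and Lemma~\ref{key statement for arrow} concludes that $\lam{x}{(t'\sr)}$ is reducible.

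I expect the only real subtlety to lie in the abstraction case, and it is already packaged into Lemma~\ref{key statement for arrow}: restricting the test arguments $w$ in the induction hypothesis to \emph{$\c$-free} reducible terms is legitimate only because every reducible $w$ is an $\rtc$-reduct of some $\c$-free reducible $u$ (Lemma~\ref{lem:rel}~\eqref{red rel}), so that (CR2) transports reducibility from $v[x:=u]$ to $v[x:=w]$. This relativization—topologically, the identity $\overline{\RED_\varphi\cap F}=\RED_\varphi$—is exactly what permits the induction hypothesis to be invoked against the narrower class of $\c$-free substituends while still certifying reducibility against all reducible arguments.
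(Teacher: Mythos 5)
Your proposal is correct and follows essentially the same route as the paper's own proof: the same five-case structural induction (variable, application, projection, pairing, abstraction), with the pairing case resting on the $\c$-freeness of the substituted components via Lemma~\ref{lem:rel}~\eqref{assert:term subst T} and Lemma~\ref{lem:weakened sufficient condition T}~\eqref{pairing} (packaged as Corollary~\ref{cor:product}), and the abstraction case delegated to Lemma~\ref{key statement for arrow} by extending the substitution with a $\c$-free reducible argument for the bound variable. Your closing remark on why restricting to $\c$-free substituends suffices is exactly the relativization already built into Lemma~\ref{key statement for arrow}, so nothing further is needed.
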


\begin{proof} 
 By induction on $t$. By the premise~\eqref{star free}, $t$ is not the constant $\c^\top$. So, we have five cases.
 \begin{enumerate}
  \item $t\equiv x_i$: Then $t\sr\equiv u_i$ is reducible by the premise~\eqref{ured}.
	
  \item $t\equiv \pi_i w$ $(i=1,2)$:
	Then by induction hypothesis, 
$w\sr$ is reducible. So is each $\pi_i (w\sr)$. This term
	is  $(\pi_i w)\sr  \equiv t\sr$.

  \item $t\equiv \pairing{u}{v}$:  
	Then $t\sr\equiv \pairing{u\sr}{v\sr}$.  By  induction hypotheses,
	both $u\sr$ and $v\sr$ are reducible.
	By Lemma~\ref{lem:rel}~\eqref{assert:term subst T}, $u\sr$ and $v\sr$ are $\c$-free.
	By Lemma~\ref{lem:weakened sufficient condition T}~\eqref{pairing}, $t\sr$, that is, $\pairing{u\sr}{v\sr}$, is
reducible.

  \item $t\equiv w v$:
	Then by induction hypotheses $w\sr$ and $v\sr$ are
	reducible, and so (by definition) is $w\sr (v\sr )$; but this
	term is $t\sr$.

  \item $t\equiv \lam{y^\varphi}{w^\psi}$ with $y$  not  free in any $\vec x, \vec u$:
By induction hypothesis, for every reducible $\c$-free $u^\varphi$, we have a reducible term $w[\vec{x},y^\varphi:=\vec{u},u^\varphi]\equiv w[\vec{x}:=\vec{u}][y^\varphi:=u^\varphi]$.
	By Lemma~\ref{key statement for arrow},
	$\lam{y^\varphi}{w \sr}\equiv  t\sr$
	is reducible. 
 \end{enumerate}
 Hence we have established the relativized reducibility theorem.
\end{proof}

\begin{theorem}\label{cd:reducibility theorem}
  All terms of $\rts$ are reducible.
\end{theorem}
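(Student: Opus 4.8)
The plan is to reduce the statement for an arbitrary term to the already-proved Relativized Reducibility Theorem~(Theorem~\ref{prop}), which applies only to $\c$-free terms, by first abstracting the constant $\c^\top$ into a fresh variable and then contracting that variable back by the gentop rule. Concretely, let $t$ be an arbitrary term and choose a variable $z^\top$ of terminal type not occurring in $t$.

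First I would form $t' := t[\c^\top := z^\top]$, replacing every occurrence of the constant $\c^\top$ by $z^\top$; by Definition~\ref{star-free} the term $t'$ is $\c$-free. Next, let $x_1^{\varphi_1},\ldots,x_n^{\varphi_n}$ list all free variables of $t'$, namely the free variables of $t$ together with $z^\top$. Each $x_i$ is reducible by (CR4) of Lemma~\ref{varred} and is trivially $\c$-free, being a variable. Applying Theorem~\ref{prop} to $t'$ with the identity substitution $u_i := x_i$ yields that $t'[\,\vec{x}:=\vec{x}\,]\equiv t'$ is reducible.

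It remains to descend from $t'$ back to $t$. By Lemma~\ref{lem:rel}~\eqref{rel} we have $t' = t[\c^\top := z^\top]\rtc t$: each occurrence of $z^\top$ is a non-canonical term of type $\top$, hence can be contracted to $\c^\top$ by the gentop rule, and contracting them all recovers $t$. Since reducibility is preserved under a single rewrite step by (CR2) of Lemma~\ref{6.2.3}, it is preserved along $\rtc$ by induction on the length of the reduction sequence. Therefore $t$, being a $\rtc$-reduct of the reducible term $t'$, is itself reducible; as $t$ was arbitrary, all terms are reducible.

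I expect no genuine obstacle in this final step: the conceptual difficulty — that the predicates of Definition~\ref{cd:reducibility} cannot be applied directly to terms containing $\c^\top$ because of the gentop and $(\etatop)$ rules — has already been absorbed into the relativization machinery of the earlier lemmas. The one point that requires care is that the detour through $t'$ is essential and cannot be shortcut: one may not invoke Theorem~\ref{prop} on $t$ directly with $z^\top$ later replaced by $\c^\top$, since $\c^\top$ is not $\c$-free and so is an inadmissible substituend in Theorem~\ref{prop}; the descent must instead proceed by rewriting, via Lemma~\ref{lem:rel}~\eqref{rel} together with (CR2).
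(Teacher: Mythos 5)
Your proof is correct and follows essentially the same route as the paper's: replace $\c^\top$ by a fresh variable $z^\top$ to obtain a $\c$-free term $\tilde t$ with $\tilde t\rtc t$, apply Theorem~\ref{prop} with the identity substitution together with (CR4), and descend back to $t$ via Lemma~\ref{lem:rel} and (CR2).
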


\begin{proof}
 Let $t$ be a term. Som variable $z^\top$ does not occurring in $t$. 
 By Lemma~\ref{lem:rel}~\eqref{rel}, there is a $\c$-free term $\tilde t$ such that $\tilde t\rtc t$.
 $\tilde t$ is reducible
  by
(CR4) and by Theorem~\ref{prop} with $u_i:=x_i$, the identity
substitution. As $\tilde t\rtc t$, (CR2) implies the reducibility of $t$.\end{proof}

 \begin{corollary}\label{thm:sncr}\rm
 $\rts$ satisfies SN.
\end{corollary}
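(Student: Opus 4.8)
The plan is to read off the corollary from the two results immediately preceding it, since all the substantive work has already been absorbed into them. Given an arbitrary term $t$ of $\rts$, I would first invoke Theorem~\ref{cd:reducibility theorem} to conclude that $t$ is reducible, and then apply (CR1) of Lemma~\ref{6.2.3} to conclude that $t$ is SN. As $t$ was arbitrary, this establishes that $\rts$ satisfies SN. The entire argument is a one-line composition: \emph{reducible} (by the reducibility theorem) $\implies$ \emph{SN} (by CR1).

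Because the proof is so short, I would not expect any obstacle at this final step; the difficulty lives entirely upstream, in securing Theorem~\ref{cd:reducibility theorem}. The usual route to ``all terms are reducible'' is the substitution-closed Theorem~\ref{prop}, instantiated at the identity substitution $u_i:=x_i$ (legitimate by (CR4)); but Theorem~\ref{prop} only speaks of $\c$-free terms, so an arbitrary $t$ must first be lifted to a $\c$-free $\tilde t$ with $\tilde t\rtc t$ via Lemma~\ref{lem:rel}\eqref{rel}, after which (CR2) transfers reducibility down from $\tilde t$ to $t$. That lifting is exactly where the relativization pays off, and it is the genuine engine of the whole development.

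Thus the honest statement of the hard part is: the present corollary is immediate, but it is immediate \emph{only because} the $(\etatop)$-induced failure of the naive key statement~\eqref{key statement for abstraction} was already circumvented. The crucial ingredient was property~\eqref{essence}, realized as Lemma~\ref{lem:rel}\eqref{assert red rel}: substituting a fresh terminal-type variable $z^\top$ for the occurrences of $\c^\top$ preserves reducibility, which is what makes $\overline{\RED_\varphi\cap F}=\RED_\varphi$ hold and lets the restricted predicate family $\{\RED_\varphi\cap F\}$ do the job of the full family. Combined with (CR0), which handles the canonical terms $\c^\tau$ and the \emph{gentop} rule $(\gentop)$ uniformly, this yields reducibility of every term. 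The corollary then follows by the trivial composition above, so in the writeup I would simply cite Theorem~\ref{cd:reducibility theorem} and (CR1) and be done.
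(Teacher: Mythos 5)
Your proof is correct and coincides with the paper's own argument: the corollary is obtained exactly by combining Theorem~\ref{cd:reducibility theorem} with (CR1) of Lemma~\ref{6.2.3}. The additional commentary on where the real work lies (the relativization via Lemma~\ref{lem:rel} and Theorem~\ref{prop}) accurately reflects the paper's structure but is not needed for this step.
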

\begin{proof}
By (CR1) and Theorem~\ref{cd:reducibility theorem}, every term of $\rts$ is SN. \end{proof}

We can define the extension of the equational theory $\rts$ by weakly extensional sum types,
and the extension of the Curien-Di Cosmo style rewriting system, and prove the SN by a relativized reducibility method~\cite[Appendix]{Akama17}.

\begin{remark}\label{sec:related work}\rm
In \cite{H70}~(\cite{S77}, resp.),  ordinal numbers are assigned to
typed $\lambda$-terms~(typed combinators, resp.) in order to prove SN
 of typed $\beta$-reduction~(typed combinatory
reduction, resp.). In \cite{B01}, cut-elimination procedure of a
deduction system is used to give an optimal upper bound of typed
$\beta\eta$-reduction.
But these two proofs seem not to generalize for SN of the rewriting system
$\rts$. In these two proofs, it is not the case that
(1) the ordinal number of $r\c^\tau$  is greater than that of $r$ and (2) the ordinal number of  the left-hand side $\lam{x^\tau}{t*^\tau}$ ($x\notin \FV(t)$) of the
rewrite rule schema $(\etatop)$ is greater than the ordinal number of the right-hand side $t$.

One may be curious about whether the higher-order recursive path ordering (HORPO for short)~\cite{jr} or the
General Schema~\cite{bjo}, could be
extended with surjective pairing and hence be used for proving SN of $\rts$. 
If there is a convenient translation of the rewrite rule schemata $(\gentop), (\eta_{top})$, and $(SP_{top})$ with
type-abstraction to an infinite simply-typed system, such that the
translation can also put all the rules of $\rts$ in the right kind of
format, it is possible that a HORPO-variant (with minimal symbol *) may
handle $\rts$. However, we need a new HORPO variant,
since the conventional ones are troubled with the  non-left-linear $(SP)$-rule \texttt{pair(p1(X), p2(X))->X}. There is no
type ordering that allows for the extraction of $\mathtt X$ from terms of smaller
type in general. The top rule ($\gentop$): $u^{\tau}\to \c^\tau$
$(\tau \isot, \ u\not\equiv \c^\tau)$ is also problematic for most
HORPO-variants. It could be handled
by using a variation of HORPO with minimal symbols, such as the one used in WANDA~\cite{kop}.
Here, WANDA is one of the most powerful automatic termination provers for higher-order rewriting.
\end{remark}
\section{SN proof  by relativized reducibility candidate method}\label{sec:polymorphism}

In \cite{CD}, an extension $\prts$ of $\rts$ by polymorphism is introduced.

Types are generated from type variables $X,Y,\ldots$
and the distinguished type constant $\top$ by means of the product type
$\varphi\times\psi$, the function type $\varphi\to\psi$, and $\all{X}\varphi$. 
  
Terms are built up similarly as the terms of $\lambda\beta\eta\pi*$, but we also consider the following two clauses:
 \begin{itemize}
  \item \emph{universal abstraction}: if $v^\varphi$ is a term, then so is $\left(\Lam{X}{v^\varphi}\right)^{\all{X}{\varphi}}$, whenever the type variable $X$ is not free in the
	type of a free variable of $v^\varphi$; and
  \item \emph{universal application}: if $t^{\all{X}{\varphi}}$ and
	$\psi$ is a type, then so is $\left(t^{\all{X}{\varphi}}\psi\right)^{\varphi [X:=\psi]}$.
 \end{itemize}
 An occurrence of a type variable $X$ is called \emph{bounded}, if it is within the scope of $\Lam{X}{\ldots}$ or $\all{X}{\ldots}$. An occurrence of a type variable which is not bounded is called \emph{free}. The set of free type variables of a term $t$ is denoted by $\FTV(t)$.
The superscript representing the
type is often omitted.

\begin{definition}[Terminal types of $\prts$]\label{def:para isot}
 \begin{enumerate}
     \item $\top\isot.$
     \item $\tau\isot\implies \varphi\to\tau\isot$.
     \item $\tau_1,\tau_2\isot\implies \tau_1\times\tau_2\isot$.
     \item $\tau\isot\implies\all{X}{\tau}\isot$.
 \end{enumerate}
 \end{definition}

\begin{definition}[Stars of $\prts$]\label{def:para star}
 \begin{enumerate}
     \item $\tau\isot\implies \c^{\varphi\to\tau}\equiv\lam{x^\varphi}{\c^\tau}$.
     \item $\tau_1,\tau_2\isot\implies \c^{\tau_1\times\tau_2}\equiv \pairing{\c^{\tau_1}}{\c^{\tau_2}}$.
     \item \label{pi terminal} $\tau\isot\implies\c^{\all{X}{\tau}}\equiv\Lam{X}{\c^\tau}$.
     \end{enumerate}
 \end{definition}

The rewrite rule schemata of the rewriting system $\prts$ are the rewrite rules~$(\beta)$, $(\pi_1)$, $(\pi_2)$, $(\eta)$, $(SP)$ of $\lambda\beta\eta\pi*$, 
\begin{align*}
&(\gentop)         &u^\tau&\to \c^\tau, &\mbox{($u\not\equiv\c^\tau$, $\tau\isot$.)}\\
&(\etatop)\quad & \lam{x^\tau}{t \c^\tau} &\to t, &
\mbox{($x\notin\FV(t)$, $\tau\isot$.)}\\
&(\SPtop1)     &\pairing{\l u}{\c^\tau}&\to u,&\mbox{($u$ has type
  $\varphi\times \tau$, $\tau\isot$.)}\\
  &(\SPtop2)     &\pairing{\c^\tau}{\r u}&\to u, &\mbox{($u$ has type
  $\tau\times \psi$, $\tau\isot$.)}
 \end{align*}
 and the following two:
  \begin{align*}
&(\beta^2)\ \ (\Lam{X}{t})\varphi\to t[X:=\varphi].\qquad\qquad
(\eta^2)\ \ \Lam{X}{s X} \to s,\ \ \mbox{($X\notin\FTV(s)$)}.
\end{align*}
 This completes the definition of $\prts$.

In~\cite{CD}, to show SN of the rewriting system
$\prts$, they tried to prove that
every term of $\prts$ in the $\gentop$-normal form is SN.  But they observed that the set of $\gentop$-normal form is
not closed under $\beta^2$-reduction; $(\Lambda X.\, \lambda x ^ X.\, \lambda y ^{X \to
Y}.\, yx) \top$ is in $\gentop$-normal form,
but its reduct $u\equiv \lambda x ^\top.\, \lambda y^{\top\to Y}.\, yx$ is not,
as $u\to_\gentop \lambda x ^\top.\, \lambda y^{\top\to Y}.\,y *$.

We will  prove SN of $\prts$.

Following \protect{\cite{GTL}}, we consider:

\begin{definition}[Neutral]\label{polymorphism:neutral} A term is  \emph{neutral} if it is not of the form
 $\pairing{u}{v}$, $\lam{x}{v}$, or $\Lam{X}{u}$. \end{definition}

\begin{definition}\label{def:can free}\begin{enumerate}
\item We say a term of $\prts$ is \emph{star-free}, if it has no  subterm  $\c^\tau$ with $\tau\isot$.

\item A set $A$ of terms is called \emph{variant-closed}, provided for any $t\in A$, and  for any variable $z^\top$  not occurring in $t$,
$t[\c^\top:=z^\top]\in A$. 
\end{enumerate}
\end{definition}

As in $\rts$, we consider (CR0) of Lemma~\ref{6.2.3} and variant-closedness  to define a \emph{reducibility candidate}~\cite{GTL}.
	 
 \begin{definition}\label{def:RC}
  A \emph{reducibility candidate}~(RC for short)  of type $\varphi$ is a set $\R$ of
  terms of type $\varphi$ such that:
  \begin{description}
   \item[(CR0)]\ If $\c^\varphi$ is defined, then $\c^\varphi\in\R$. Moreover $\R$ is variant-closed.
    \item[(CR1)]\ If $t^\varphi \in \R$, then $t^\varphi$ is SN.
\item[(CR2)]\ If $t^\varphi \in \R$ and $t^\varphi\to t'$, then $t' \in \R$.
   \item[(CR3)]\ If $t^\varphi$ is neutral, and any reduct of $t^\varphi$ is in $\R$, then $t^\varphi \in \R$.
  \end{description}
    \end{definition}

\begin{lemma}(CR0) and (CR3) implies 
\begin{description}
 \item[(CR4)]\ If $t^\varphi$ is  a variable, then $t$
	    is in $\R$.
\end{description}
\end{lemma}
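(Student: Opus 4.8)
The plan is to mimic the proof of the analogous Lemma~\ref{varred} for $\rts$. First I would observe that a variable $t^\varphi$ is neutral in the sense of Definition~\ref{polymorphism:neutral}, since it has none of the excluded shapes $\pairing{u}{v}$, $\lam{x}{v}$, or $\Lam{X}{u}$. Hence, in order to conclude $t\in\R$ via (CR3), it suffices to show that every reduct $t'$ of $t$ lies in $\R$.

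The key step is to enumerate the reducts of a bare variable. Inspecting the rewrite rule schemata of $\prts$, every schema other than $(\gentop)$ requires the redex to carry some outer structure (an application, a pairing, a projection, a $\lambda$-abstraction, or a universal abstraction/application), so none of $(\beta)$, $(\l)$, $(\r)$, $(\eta)$, $(SP)$, $(\etatop)$, $(\SPtop1)$, $(\SPtop2)$, $(\beta^2)$, $(\eta^2)$ can fire on $t$ itself. The only applicable schema is $(\gentop)$, which applies exactly when $\varphi\isot$; in that case $t\not\equiv\c^\varphi$ (a variable is never canonical), so the unique reduct is $t'\equiv\c^\varphi$. Thus if $\varphi\notin\isomorphicT$ then $t$ has no reduct and (CR3) applies vacuously, while if $\varphi\isot$ then the sole reduct $\c^\varphi$ lies in $\R$ by (CR0). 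In either case all reducts of $t$ are in $\R$.

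Finally I would invoke (CR3): since $t$ is neutral and all its reducts belong to $\R$, we obtain $t\in\R$, as desired. The only point needing care, rather than a genuine obstacle, is the exhaustive check that no schema besides $(\gentop)$ can rewrite a lone variable; this is immediate from the shapes of the left-hand sides but should be recorded explicitly to justify that $\c^\varphi$ (when defined) is the unique reduct.
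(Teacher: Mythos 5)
Your proposal is correct and follows essentially the same route as the paper: the paper's proof (identical to that of Lemma~\ref{varred}) observes that any reduct of a variable must be the canonical term, which lies in $\R$ by (CR0), and then concludes by (CR3). Your version merely makes explicit the neutrality of a variable and the exhaustive check that only $(\gentop)$ can fire on a bare variable, which the paper leaves implicit.
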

\begin{proof}The proof is exactly the same as the proof of Lemma~\ref{varred}.\end{proof}
\begin{definition}\label{def:miscrc}
\begin{enumerate}
\item Let $\SN^{\psi}$ be the set of SN terms of type $\psi$.
\item
For  an RC
 $\R_i$ of type $\varphi_i$   of type $\psi_i$ $(i=1,2)$,  then
   \begin{align*}  
 \R_1\times \R_2& = \{ t^{\varphi_1\times\varphi_2} \mid \pi_i t  \in \R_i \ (i=1,2)\}, \\
   \R_1\to\R_2     &= \{ t^{\varphi_1\to\varphi_2} \mid \forall u (u \in \R_1 \implies t u \in \R_2)\}.
    \end{align*}
    \end{enumerate}
    \end{definition}
    
    \begin{lemma}\label{lem:rc}
    \begin{enumerate}
    \item  \label{lem:nonvoid}
For any type $\psi$, $\SN^{\psi}$ is an RC. 

    \item \label{assert:rc prod arrow}If $\R_i$ is an RC of type $\varphi_i$ $(i=1,2)$, then 
    \begin{enumerate}\item \label{prodRC} $\R_1\times\R_2$ is an RC of type $\varphi_1\times\varphi_2$, and
    \item \label{arrowRC} $\R_1\to\R_2$ is an RC of type $\varphi_1\to\varphi_2$.
    \end{enumerate}
    \end{enumerate}
    \end{lemma}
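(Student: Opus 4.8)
The plan is to verify the four defining conditions (CR0)--(CR3) of Definition~\ref{def:RC} for each of the three sets $\SN^\psi$, $\R_1\times\R_2$, and $\R_1\to\R_2$. Almost every clause is an abstract replay of the corresponding case of Lemma~\ref{6.2.3}: there the concrete predicate ``reducible'' was shown to enjoy (CR0)--(CR3) using only the closure of the component predicates under the relevant reductions, their (CR0)/(CR3), and a well-founded induction on the reducts of a fixed strongly normalizing argument. Since $\R_1,\R_2$ are assumed to be RCs, those arguments carry over verbatim with ``reducible'' replaced by ``$\in\R_i$''. The genuinely new content, relative to Lemma~\ref{6.2.3}, is the variant-closedness half of (CR0), which I treat separately for each construction.

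For part~\eqref{lem:nonvoid}, (CR1)--(CR3) for $\SN^\psi$ are immediate: (CR1) is by definition; if $t$ is SN and $t\to t'$ then $t'$ is SN, giving (CR2); and if every reduct of a term is SN then so is the term, giving (CR3) (neutrality is not even needed). The existence clause of (CR0) is trivial, since every canonical term $\c^\psi$ is a normal form and hence SN: none of $(\gentop),(\etatop),(\SPtop{})$ nor $(\beta),(\eta),(SP)$ can fire at or below a canonical term. Variant-closedness of $\SN^\psi$ is exactly the analogue of property~\eqref{essence} for $\prts$, which I would reprove as in the atomic case of Lemma~\ref{lem:rel}~\eqref{assert red rel}: given an infinite reduction from $t[\c^\top:=z^\top]$ with $z^\top\notin t$, pushing it through $[z^\top:=\c^\top]$ either yields an infinite reduction of $t$, contradicting $t\in\SN$, or is eventually constant, forcing a tail of $\gentop$-steps whose redex is $z^\top$; but $\to_\gentop$ is terminating because each step shortens the term or lowers the number of non-$\c$ subterms of terminal type.

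For part~\eqref{assert:rc prod arrow}, the conditions (CR1)--(CR3) and the existence clause of (CR0) for $\R_1\times\R_2$ and $\R_1\to\R_2$ are copied from the product and arrow cases of Lemma~\ref{6.2.3}. For $\R_1\times\R_2$ one works through $\pi_i t\in\R_i$ and the (CR1)/(CR2)/(CR3) of $\R_i$, using that the neutral term $\pi_i t$ has only a top $\gentop$-step to $\c^{\varphi_i}$ or an internal step $\pi_i t\to\pi_i t'$ (as a neutral $t$ is not a pair). For $\R_1\to\R_2$ one runs the well-founded induction $\wfi{\{u\mid u\in\R_1\},\,\to}$ on a reducible argument exactly as in the arrow case, the key point again being that a neutral $t$ makes $tu$ carry no top $\beta$-redex. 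Variant-closedness for the product is easy: substitution commutes with projection, so for $z^\top\notin t$ and $t\in\R_1\times\R_2$ one has $\pi_i(t[\c^\top:=z^\top])\equiv(\pi_i t)[\c^\top:=z^\top]$, which lies in $\R_i$ by variant-closedness of $\R_i$ (as $\pi_i t\in\R_i$ and $z^\top\notin\pi_i t$); hence $t[\c^\top:=z^\top]\in\R_1\times\R_2$.

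The hard part will be variant-closedness of $\R_1\to\R_2$, because its defining quantifier ranges over \emph{all} $u\in\R_1$, including arguments in which the fresh $z^\top$ occurs, so one cannot simply apply $[\c^\top:=z^\top]$ to $tu$. I would route through the variant-closedness of $\R_2$. Fix $t\in\R_1\to\R_2$ with $z^\top\notin t$, write $\sigma=[\c^\top:=z^\top]$, and take any $u\in\R_1$. Put $\bar u:=u[z^\top:=\c^\top]$; then $u\rtc\bar u$ by contracting each $z^\top$ via $(\gentop)$, so $\bar u\in\R_1$ by (CR2), while $z^\top\notin\bar u$. Thus $z^\top\notin t\bar u$ and $t\bar u\in\R_2$, and variant-closedness of $\R_2$ gives $(t\bar u)\sigma\equiv t\sigma\,(\bar u\sigma)\in\R_2$. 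Finally $\bar u\sigma\rtc u$: the term $\bar u\sigma$ agrees with $u$ except that the former $\c^\top$-positions of $u$ now carry $z^\top$, and contracting exactly those $z^\top$ back to $\c^\top$ (again $(\gentop)$) recovers $u$; hence $t\sigma\,(\bar u\sigma)\rtc t\sigma\,u$, and (CR2) yields $t\sigma\,u\in\R_2$. As $u\in\R_1$ was arbitrary, $t\sigma\in\R_1\to\R_2$. The delicate point to get right is precisely this last selective $(\gentop)$-contraction and the substitution bookkeeping underlying $\bar u\sigma\rtc u$, which is the analogue --- now for a possibly non-fresh $u$ --- of the arrow case of Lemma~\ref{lem:rel}~\eqref{assert red rel}.
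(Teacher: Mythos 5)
Your proposal is correct and follows essentially the same route as the paper, which simply replays the (CR0)--(CR3) cases of Lemma~\ref{6.2.3} with ``reducible'' replaced by membership in $\R_i$ and obtains variant-closedness by the argument of Lemma~\ref{lem:rel}~\eqref{red rel}. Your treatment of variant-closedness for $\R_1\to\R_2$ is in fact slightly more careful than the paper's model argument: the arrow case of Lemma~\ref{lem:rel}~\eqref{red rel} tacitly assumes the fresh variable $z^\top$ does not occur in the argument $u$, and your detour through $\bar u:=u[z^\top:=\c^\top]$ together with (CR2) and $\bar u[\c^\top:=z^\top]\rtc u$ is exactly the repair needed to cover arbitrary $u\in\R_1$.
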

    \begin{proof} \eqref{lem:nonvoid} (CR0):  The variant-closedness is essentially the proof of Lemma~\ref{lem:rel}~\eqref{red rel} for $\varphi$ being atomic.  (CR1): By the definition of $\SN^{\varphi_2}$. (CR2): If $t\in\SN^{\varphi_2}$ and $t\to t'$, then $t'\in\SN^{\varphi_2}$.
 (CR3): Let $t$ be a neutral term of type $\varphi_2$ such that any reduct $t'$ of $t$ is in $\SN^{\varphi_2}$. Then $t$ is in $\SN^{\varphi_2}$. 

\medskip
\eqref{assert:rc prod arrow}
The proof of (CR0), $\ldots$, (CR3) is the proof of Lemma~\ref{6.2.3} for $\varphi$  being a function type or a product type.
But  `by induction hypothesis (CR$k$) on $\varphi_i$'  should be replaced by  `by (CR$k$) of RC $\R_i$.'
The variant-closedness is  the proof of Lemma~\ref{lem:rel}~\eqref{red rel} for corresponding $\varphi$.
But `reducible' should be replaced by ``in $\R$' or `in $\S$.'
\end{proof}

For a type $\varphi$, a sequence $\vec{X}$ of distinct type variables $X_1,\ldots,X_m$, and a sequence $\vec{\psi}$ of types $\psi_1,\ldots,\psi_m$,
let   $\varphi[\vec{X}:=\vec{\psi}]$
  be the simultaneous substitution.
  
  \begin{definition}[Parametric Reducibility]  \label{def:pred} Suppose that
  \begin{enumerate}
  \item  $\varphi$ is a type;
  \item  a sequence
  $\vec{X}$ of distinct type variables $X_1,\ldots,X_m$ contains all   free
type variables of $\varphi$;
\item  $\vec{\psi}$ is a sequence of
  types $\psi_1,\ldots,\psi_m$; and
  \item
 $\vec{\R}$ is a
sequence of RCs $\R_1,\ldots,\R_m$ of corresponding
  types $\vec{\psi}$.
  \end{enumerate} 
  Define a set
 $\RED_\varphi [\vec{X}:=\vec{\R}]$ of terms of type
  $\varphi[\vec{X}:=\vec{\psi}]$  as follows:
 \begin{enumerate}
  \item If $\varphi=\top$, $\RED_\varphi
	  [\vec{X}:=\vec{\R}]=\SN^\top$;
   \item  If $\varphi=X_i$,   $\RED_\varphi
	  [\vec{X}:=\vec{\R}] = \R_i$;
   \item   If $\varphi\equiv \varphi'\circ\varphi''$,  
   $\RED_\varphi
	 [\vec{X}:=\vec{\R}] = \RED_{\varphi'}
	 [\vec{X}:=\vec{\R}]\;\circ\;
	 \RED_{\varphi''}[\vec{X}:=\vec{\R}]$ $(\circ=\to, \times)$ where the latter $\to,\times$ are defined in Definition~\ref{def:miscrc};

   \item \label{pred:Pi} If  $\varphi\equiv \all{Y}{\varphi' }$, $Y$ is not free in $\vec \psi$ and $Y\ne X_i$ ($i=1,\ldots,m$), then $\RED_{\varphi}
	 [\vec{X}:=\vec{\R}]$ is the set of terms
	$ t^{\all{Y}{\varphi'  }[\vec{X}:=\vec\psi]}$ such that for any type $\psi$ and  any RC $\S$ of type $\psi$, 
$( t \psi )^{\varphi' [\vec X, Y := \vec \psi, \psi]} \in \RED_{\varphi'  }[\vec{X},Y:=\vec{\R}, \S ] $. 
\end{enumerate}
\end{definition}

  \begin{lemma} \label{lem:indeed}
  Under the conditions of Definition~\ref{def:pred},
$\RED_\varphi [\vec{X}:=\vec{\R}]$ is an RC
 of type $\varphi[\vec{X}:=\vec{\psi}]$.
  \end{lemma}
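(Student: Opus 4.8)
The plan is to argue by induction on the structure of the type $\varphi$, verifying the four clauses (CR0)--(CR3) of Definition~\ref{def:RC} for $\RED_\varphi[\vec X:=\vec\R]$ at each step. For the three non-quantifier cases essentially all of the work is already available. When $\varphi=\top$ we have $\RED_\varphi[\vec X:=\vec\R]=\SN^\top$, an RC by Lemma~\ref{lem:rc}~\eqref{lem:nonvoid}; when $\varphi=X_i$ we have $\RED_\varphi[\vec X:=\vec\R]=\R_i$, an RC by the fourth hypothesis of Definition~\ref{def:pred}; and when $\varphi\equiv\varphi'\circ\varphi''$ with $\circ\in\{\to,\times\}$, the induction hypothesis makes $\RED_{\varphi'}[\vec X:=\vec\R]$ and $\RED_{\varphi''}[\vec X:=\vec\R]$ into RCs, so Lemma~\ref{lem:rc}~\eqref{assert:rc prod arrow} delivers the result at once.

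The only genuinely new case is $\varphi\equiv\all{Y}{\varphi'}$, and here I would check the four clauses directly from the definition of $\RED_{\all Y{\varphi'}}[\vec X:=\vec\R]$ as the set of $t$ such that $t\psi\in\RED_{\varphi'}[\vec X,Y:=\vec\R,\S]$ for every type $\psi$ and every RC $\S$ of type $\psi$. The engine throughout is the induction hypothesis applied to $\varphi'$ in the extended context $\vec X,Y:=\vec\R,\S$: its side conditions ($Y\ne X_i$ and $Y$ not free in $\vec\psi$) are exactly those stipulated in Definition~\ref{def:pred}~\eqref{pred:Pi}, so $\RED_{\varphi'}[\vec X,Y:=\vec\R,\S]$ is an RC. From this, (CR1) follows by instantiating $\psi$ by any type and $\S$ by $\SN^\psi$ (an RC by Lemma~\ref{lem:rc}~\eqref{lem:nonvoid}) and observing that $t\psi$ being SN forces $t$ to be SN; (CR2) follows because $t\to t'$ gives $t\psi\to t'\psi$, which is in the RC $\RED_{\varphi'}[\ldots]$ by its own (CR2); and the variant-closedness half of (CR0) follows because $t[\c^\top:=z^\top]\psi\equiv(t\psi)[\c^\top:=z^\top]$, the fresh $z$ is absent from $t\psi$, and $\RED_{\varphi'}[\ldots]$ is variant-closed by its own (CR0).

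The two points demanding care, and where I expect the main obstacle, both live inside this quantifier case. For (CR3): given a neutral $t$ all of whose reducts lie in $\RED_{\all Y{\varphi'}}[\ldots]$, I must show $t\psi\in\RED_{\varphi'}[\ldots]$, and I would do so through (CR3) of that RC, using that $t\psi$ is neutral and that, because $t$ is neutral (so $t\psi$ is neither a $\beta^2$- nor an $\eta^2$-redex), every reduct of $t\psi$ is either $\c$ of the appropriate type, when that type lies in $\isomorphicT$ (in the RC by (CR0)), or of the form $t'\psi$ with $t\to t'$ (in the RC by hypothesis). For the canonical-term half of (CR0): when $(\all Y{\varphi'})[\vec X:=\vec\psi]\isot$, Definition~\ref{def:para isot} forces $\varphi'[\vec X:=\vec\psi]\isot$, and by Definition~\ref{def:para star}~\eqref{pi terminal} $\c^{\all Y{\ldots}}\equiv\Lam Y{\c^{\varphi'[\ldots]}}$; I would verify that $\c^{\all Y{\ldots}}\psi$ is neutral and that each of its reducts is a canonical term of a type in $\isomorphicT$ (using that type substitution preserves $\isomorphicT$ and commutes with the formation of canonical terms), hence in $\RED_{\varphi'}[\ldots]$ by (CR0), and then conclude membership by (CR3) of that RC. The delicate bookkeeping is precisely this commutation of canonical terms with type substitution together with the neutral-reduct analysis; everything else is a mechanical transcription of Girard's $\Pi$-case, now carrying the variant-closedness clause along for free.
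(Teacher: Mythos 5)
Your proposal is correct and follows essentially the same route as the paper: induction on $\varphi$, with the non-quantifier cases discharged by Lemma~\ref{lem:rc}, and the $\all{Y}{\varphi'}$ case handled by verifying (CR0)--(CR3) and variant-closedness directly through the RC $\RED_{\varphi'}[\vec X, Y:=\vec\R,\S]$ supplied by the induction hypothesis, including the same commutation $(t\,\psi)[\c^\top:=z^\top]\equiv t[\c^\top:=z^\top]\,\psi$ and the same neutral-reduct analysis for (CR0) and (CR3). The points you flag as delicate are exactly the ones the paper's proof relies on implicitly, so nothing further is needed.
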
	 
\begin{proof} By induction on $\varphi$. First consider the case 
 $\varphi\equiv\all{Y}{\varphi'}$. 
 Without loss of generality, $Y$ does not occur free in $\vec\psi$.
 Let $\S$ be an RC of type $\varphi''$.  By induction hypothesis,
	       \begin{align} \T:=\vec\RED_{\varphi'}[\vec{X},Y:=\vec{\R},\S], \mbox{ is an RC.} \label{ind}\end{align}
 \begin{description}
 \item[(CR0)]\ 
 
\ Let $\c^{\all{Y}{\varphi'}[\vec{X}:=\vec{\psi}]} \varphi''\to s$ where $Y$ is not free in $\vec{\psi}$. We will verify $s\in \T$.
Then 
$s\equiv \c^{\varphi'[\vec X,Y:=\vec\psi,\varphi'']}$.  By \eqref{ind},
$s \in \T$.

Thus $\c^{\all{Y}{\varphi'}[\vec{X}:=\vec{\psi}]} \varphi''\in \T$ by
	    (CR3).
	    So
$\c^{\all{Y}{\varphi'}[\vec{X}:=\vec{\psi}]} \in \RED_{\all{Y}{\varphi'}}[\vec{X}:=\vec\R]$.

  \item[(CR1)]\ Let $t \in \RED_\varphi [\vec{X}:=\vec{\R}]$. Then
 $t \varphi'' \in \T$  by  Definition~\ref{def:pred}. 
By \eqref{ind} and (CR1) of $\T$,  $t \varphi''$ is SN. So  $t$ is SN.
	       
\item[(CR2)]\ Let $t \in \RED_\varphi [\vec{X}:=\vec{\R}]$. 
 Then
 $t \varphi'' \in \T$ by  Definition~\ref{def:pred}. 
	 Assume $t\to t'$. Then
      $t \varphi''\to t' \varphi''$.   By \eqref{ind} and (CR2) of $\T$, $t' \varphi'' \in \T$. So $t' \in \RED_\varphi [\vec{X}:=\vec{\R}]$.

  \item[(CR3)]\ Suppose that $t$ is neutral and that $t'\in\RED_\varphi [\vec{X}:=\vec{\R}]$ whenever  $t\to t'$. Let  $t\varphi''\stackrel{\Delta}\to s$.  
  As $t$ is neutral, we have two cases:
  \begin{enumerate}
  \item $\Delta\equiv t\varphi''$:
  Then $s$ is $\c^{\varphi' [\vec X, Y:=\vec \psi, \varphi'']}$, because $t$ is neutral. By \eqref{ind} and (CR0) of $\T$, $s \in \T$.
\item Otherwise,  $s\equiv t' \varphi''$ with $t \stackrel{\Delta}\to t'$. $s\in
	       \T$
	       by $t'\in\RED_\varphi
	       [\vec{X}:=\vec{\R}]$.
	       \end{enumerate}	       By \eqref{ind} and (CR3) of $\T$, $t \varphi'' \in \T$. So $t \in \RED_\varphi [\vec{X}:=\vec{\R}]$.
\end{description}

 To prove the variant-closedness of $\RED_{\all{Y}{\varphi'}} [\vec{X}:=\vec{\R}]$, take an arbitrary $t$ from the set and a variable $z^\top$ not occurring in $t$.
 Then $t\varphi'' \in \RED_{\varphi'} [\vec{X},Y:=\vec{\psi},\S]$ for every RC $\S$ of type $\varphi''$.
 $z^\top$ does not occur in $t\varphi''$.
 By induction on $\varphi'$, $\RED_{\varphi'} [\vec{X},Y:=\vec{\psi},\S]$ is variant-closed. 
 So, $ (t \varphi'' )[\c^\top := z^\top ] \equiv t [\c^\top := z^\top ] \varphi''$ is in $\RED_{\varphi'} [\vec{X},Y:=\vec{\psi},\S]$.
 Thus $ t \in \RED_{\all{Y}{\varphi'}} [\vec{X}:=\vec{\R}]$.
 To sum up, $\RED_{\all{Y}{\varphi'}} [\vec{X}:=\vec{\R}]$ is variant-closed.
 
 \medskip
 When $\varphi\not\equiv\all{Y}{\varphi'}$, we can prove (CR0), (CR1), (CR2), and (CR3) of $\RED_\varphi
	       [\vec{X}:=\vec{\R}]$,
  by induction hypotheses on $\varphi$ and Lemma~\ref{lem:rc}.
\end{proof}

\begin{lemma}\label{sub} Suppose  that 
\begin{enumerate}
\item $\varphi,\psi$ are types, $Y$ is a type variable;
\item
   a sequence $\vec{X}$  of distinct type variables $X_1,\ldots,X_m$ contains all  free
type variables of $\varphi[Y:=\psi]$ and those of $\psi$;
\item
   $X_i\ne Y$ $(i=1,\ldots,m)$; and 
\item
 $\vec\R$ is a sequence of RCs $\R_1,\ldots,\R_m$. 
 \end{enumerate}
 Then
$$\RED_{\varphi[Y:=\psi]}[\vec{X}:=\vec{\R}] = \RED_{\varphi}
 [\vec{X},Y:=\vec{\R}, \RED_\psi
 [\vec{X}:=\vec{\R}] ].$$
\end{lemma}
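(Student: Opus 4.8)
The plan is to argue by induction on the structure of the type $\varphi$, checking that the two descriptions of $\RED$ agree clause by clause. Before starting I would record that the right-hand side is well-formed: by Lemma~\ref{lem:indeed}, $\RED_\psi[\vec X:=\vec\R]$ is an RC of type $\psi[\vec X:=\vec\psi]$ (this uses hypothesis~(2), that $\vec X$ lists all free type variables of $\psi$), so it is a legitimate value to bind to $Y$ in the extended environment $[\vec X, Y := \vec\R, \RED_\psi[\vec X:=\vec\R]]$; moreover $\FTV(\varphi)\subseteq\FTV(\varphi[Y:=\psi])\cup\{Y\}\subseteq\{\vec X, Y\}$, so that environment is adequate for $\varphi$. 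I would also note two elementary structural facts about $\RED$, each by a routine induction on the type: (a) $\RED_\varphi[\cdot]$ depends only on the RCs assigned to the free type variables of $\varphi$, so binding a variable not free in $\varphi$ does not change the set; and (b) $\RED_\varphi[\cdot]$ is invariant under reordering the bindings of the environment.

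First I would dispatch the base and connective cases. When $\varphi=\top$ both sides are $\SN^{\top}$. When $\varphi=X_i$, hypothesis~(3) gives $\varphi[Y:=\psi]\equiv X_i$, and both sides reduce to $\R_i$. When $\varphi=Y$, we have $\varphi[Y:=\psi]\equiv\psi$, so the left side is $\RED_\psi[\vec X:=\vec\R]$, while the right side looks up $Y$ in the extended environment and returns exactly $\RED_\psi[\vec X:=\vec\R]$. For the arrow and product cases $\varphi\equiv\varphi'\circ\varphi''$, substitution distributes over the connective, so the left side is $\RED_{\varphi'[Y:=\psi]}[\vec X:=\vec\R]\circ\RED_{\varphi''[Y:=\psi]}[\vec X:=\vec\R]$; applying the induction hypotheses to $\varphi'$ and $\varphi''$ and then the definition of $\circ$ (Definition~\ref{def:miscrc}) yields the right side.

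The hard part will be the quantifier case $\varphi\equiv\all{Z}{\varphi'}$. I would first invoke $\alpha$-conversion to assume $Z$ is fresh: $Z\ne Y$, $Z\ne X_i$ for all $i$, and $Z$ not free in $\psi$. Then $\varphi[Y:=\psi]\equiv\all{Z}{\varphi'[Y:=\psi]}$, and by Definition~\ref{def:pred}~\eqref{pred:Pi} the left side is the set of $t$ such that $t\chi\in\RED_{\varphi'[Y:=\psi]}[\vec X, Z:=\vec\R,\S]$ for every type $\chi$ and every RC $\S$ of type $\chi$. Applying the induction hypothesis to $\varphi'$ at the enlarged environment $[\vec X, Z:=\vec\R,\S]$ rewrites the membership condition as $t\chi\in\RED_{\varphi'}[\vec X, Z, Y:=\vec\R,\S,\RED_\psi[\vec X, Z:=\vec\R,\S]]$; here freshness of $Z$ and fact~(a) replace $\RED_\psi[\vec X, Z:=\vec\R,\S]$ by $\RED_\psi[\vec X:=\vec\R]$. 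Finally fact~(b) reorders the bindings, so the condition reads $t\chi\in\RED_{\varphi'}[\vec X, Y, Z:=\vec\R,\RED_\psi[\vec X:=\vec\R],\S]$, which is precisely the defining condition for $t\in\RED_{\all{Z}{\varphi'}}[\vec X, Y:=\vec\R,\RED_\psi[\vec X:=\vec\R]]$, the right side. The only delicate points are keeping the type-variable bookkeeping consistent under the $\alpha$-renaming and being explicit about the weakening and exchange facts; everything else is a direct unfolding of the definitions.
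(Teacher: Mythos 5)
Your proposal is correct and follows exactly the route the paper takes: the paper's entire proof of Lemma~\ref{sub} is ``by induction on $\varphi$,'' and your case analysis (base types, the variable cases split by whether the variable is $Y$, the connectives, and the quantifier case handled via $\alpha$-renaming plus the weakening and exchange facts for $\RED$) is the standard way to carry out that induction. The only difference is that you have made explicit the auxiliary structural facts the paper leaves tacit.
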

\begin{proof}
 By induction on $\varphi$.
\end{proof}

\begin{lemma}\label{lem:weakened sufficient condition poly}Let $\R,\S$ be RCs of type $\varphi,\psi$. 
   \begin{enumerate}
   \item  \label{pairing poly} Let $u^\varphi,v^\psi$ be any terms.
   $\pairing{u^\varphi}{v^\psi}\in \R\times\S$, provided that
	 \begin{enumerate}
	  \item 
		$u\in \R$ and $v\in \S$;

	  \item if
		$u\equiv \l w$ and $v \equiv \c^\psi$,
		then $w\in \R\times\S$; and 
	  \item  if
		$v\equiv \r w$ and $u \equiv \c^\varphi$,  then $w\in \R\times\S$.
\end{enumerate}

   \item  \label{lambda poly} Let $v^\psi$ be any term.
   $\lam{x^\varphi}{v^\psi}\in \R\to\S$, provided that
	 \begin{enumerate}
	  \item 
 $v^\psi[x^\varphi:=u^\varphi]\in\S$ for every  \emph{possibly non-star-free term} $u^\varphi\in \R$; and

	  \item if $v\equiv w^{\varphi\to\psi} \c^\varphi$
		and $x^\varphi\notin \FV(w^{\varphi\to\psi})$,
		then $w^{\varphi\to\psi} \in \R\to\S$.
		      \end{enumerate}
  \end{enumerate} \end{lemma}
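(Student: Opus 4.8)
The plan is to follow the proof of Lemma~\ref{lem:weakened sufficient condition T} line for line, making two systematic replacements: every occurrence of ``reducible'' becomes ``in $\R$'', ``in $\S$'', ``in $\R\times\S$'' or ``in $\R\to\S$'' as dictated by the type, and every appeal to a clause of Lemma~\ref{6.2.3} becomes an appeal to the corresponding clause (CR0)--(CR3) of Definition~\ref{def:RC} for the pertinent candidate. Throughout I would use that $\R\times\S$ and $\R\to\S$ are themselves RCs, which is Lemma~\ref{lem:rc}~\eqref{assert:rc prod arrow}, together with (CR4) for variables.

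For part~\eqref{pairing poly}, the first hypothesis and (CR1) give that $u\in\R$ and $v\in\S$ are SN, so the well-founded induction $\wfi{\left(\R,\ \to\right)\#\left(\S,\ \to\right)}$ is available, with $\to$ the rewrite relation of $\prts$. By Definition~\ref{def:miscrc} it suffices to place $\l\pairing{u}{v}$ in $\R$ and $\r\pairing{u}{v}$ in $\S$; as both are neutral, (CR3) of $\R$, resp.\ $\S$, reduces this to showing every one-step reduct lies in the candidate. I would then exhaust the eight redex positions of $\l\pairing{u}{v}\stackrel{\Delta}{\to}s$ exactly as before: the two cases where $\l\pairing{u}{v}$ is itself a redex (by $(\gentop)$, giving $\c^\varphi\in\R$ by (CR0); by $(\l)$, giving $u\in\R$ by hypothesis); the four cases where $\pairing{u}{v}$ is a redex (by $(\gentop)$, giving $\l\c^{\varphi\times\psi}\in\R$ since $\c^{\varphi\times\psi}\in\R\times\S$ by (CR0); by $(SP)$, $(\SPtop1)$, $(\SPtop2)$, each yielding $\l w$ handled by the respective hypothesis); and the two subterm cases $\Delta\subseteq u$ and $\Delta\subseteq v$, where (CR2) keeps the reduct in $\R$, resp.\ $\S$, and the WF induction hypothesis applies. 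The symmetric argument for $\r\pairing{u}{v}$ completes part~\eqref{pairing poly}.

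For part~\eqref{lambda poly}, to show $\lam{x^\varphi}{v^\psi}\in\R\to\S$ I would fix an arbitrary $u\in\R$ and prove $(\lam{x}{v})u\in\S$. By (CR4) the variable $x^\varphi$ lies in $\R$, so $v\equiv v[x:=x]\in\S$ by the first hypothesis; hence $u$ and $v$ are SN by (CR1) and the same well-founded induction is available. Since $(\lam{x}{v})u$ is neutral, (CR3) of $\S$ reduces the goal to placing every reduct in $\S$, and I would run through the seven cases of Lemma~\ref{lem:weakened sufficient condition T}~\eqref{lambda}: $(\lam{x}{v})u$ itself a redex by $(\gentop)$ (giving $\c^\psi\in\S$ by (CR0)) or by $(\beta)$ (giving $v[x:=u]\in\S$ by the first hypothesis); $\lam{x}{v}$ a redex by $(\gentop)$ (giving $\c^{\varphi\to\psi}u\in\S$ since $\c^{\varphi\to\psi}\in\R\to\S$ by (CR0)), by $(\eta)$ (collapsing to the $(\beta)$ case), or by $(\etatop)$ (giving $wu\in\S$ by the second hypothesis); and the two subterm cases handled by (CR2) and the WF induction hypothesis.

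The one point genuinely requiring the polymorphic setting, and what I would flag as the main obstacle, is checking that this case analysis is still exhaustive once $\prts$ adds the schemata $(\beta^2)$ and $(\eta^2)$. These act only on universal applications $t\,\psi$ and universal abstractions $\Lam{X}{s}$, neither of which can occur at the head of $\l\pairing{u}{v}$, $\r\pairing{u}{v}$, or $(\lam{x}{v})u$; therefore they introduce no new top-level redex and can fire only inside $u$ or $v$, where they are already absorbed by the subterm cases $\Delta\subseteq u$, $\Delta\subseteq v$. Once this observation is made, not a single case of the simply-typed argument is added or lost, and the proof transfers unchanged.
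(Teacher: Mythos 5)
Your proposal is correct and is exactly what the paper intends: its own ``proof'' of this lemma is the single sentence that the argument is similar to that of Lemma~\ref{lem:weakened sufficient condition T}, and you have carried out precisely that translation, replacing ``reducible'' by membership in the appropriate candidate and invoking (CR0)--(CR3) of Definition~\ref{def:RC}. Your explicit check that $(\beta^2)$ and $(\eta^2)$ create no new head redexes in $\l\pairing{u}{v}$, $\r\pairing{u}{v}$, or $(\lam{x}{v})u$ is the one detail the paper leaves tacit, and you have handled it correctly.
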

\begin{proof} The proof is similar to the proof of Lemma~\ref{lem:weakened sufficient condition T}.\end{proof}

\begin{lemma}[Universal abstraction]\label{lem:uabs}
Suppose that 
\begin{enumerate}
\item 
$\varphi$ is a type;

\item 
a sequence $\vec{X}$ of distinct type variables $X_1,\ldots,X_m$ contains all free type variables of $\all{Y}{\varphi}$;

\item  
$X_i\ne Y$ $(i=1,\ldots, m)$,  $\vec{\psi}$ is a sequence of types $\psi_1,\ldots,\psi_m$;

\item
$\vec{\R}$ is a sequence of RCs $\R_1,\ldots,\R_m$ of types $\vec\psi$; 
\item
 $Y$ does not occur free in $\vec\psi$; 

\item $w^{\varphi[\vec X :=\vec\psi]}$ is a term; and

\item \label{ass} for any type $\psi$ and
 any RC $\S$ of type  $\psi$, 
 \begin{align*}(w[Y:=\psi ])^{ \varphi [\vec X, Y:=\vec\psi,\psi]} \in\RED_{\varphi}[\vec{X},Y:=\vec{\R},\S ].\end{align*}
\end{enumerate}
Then
$\Lam{Y}{w} \in \RED_{\all{Y}{\varphi}}[\vec{X}:=\vec{\R}]$.\end{lemma}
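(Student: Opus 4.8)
The plan is to peel off the definition of $\RED_{\all{Y}{\varphi}}[\vec{X}:=\vec{\R}]$. By Definition~\ref{def:pred}~\eqref{pred:Pi} it suffices to show that for \emph{every} type $\psi$ and \emph{every} RC $\S$ of type $\psi$ the universal application $\left(\Lam{Y}{w}\right)\psi$ lies in $\T:=\RED_{\varphi}[\vec{X},Y:=\vec{\R},\S]$; by Lemma~\ref{lem:indeed} this $\T$ is an RC, so all of (CR0)--(CR3) are at our disposal. Since $\left(\Lam{Y}{w}\right)\psi$ is neutral (Definition~\ref{polymorphism:neutral}), the natural route is (CR3): reduce the goal to checking that every one-step reduct of $\left(\Lam{Y}{w}\right)\psi$ belongs to $\T$.

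Before that I would establish two preliminary facts. First, $w$ is SN: instantiating hypothesis~\eqref{ass} with $\psi:=Y$ and $\S:=\SN^{Y}$ (an RC by Lemma~\ref{lem:rc}~\eqref{lem:nonvoid}) gives $w\equiv w[Y:=Y]\in\RED_{\varphi}[\vec{X},Y:=\vec{\R},\SN^{Y}]$, which is SN by (CR1). Second, hypothesis~\eqref{ass} is \emph{inherited by every reduct} of $w$: the rewrite relation is stable under type substitution, so $w\to w'$ yields $w[Y:=\psi']\to w'[Y:=\psi']$ for every $\psi'$, and (CR2) of each $\RED_{\varphi}[\vec{X},Y:=\vec{\R},\S']$ then shows that $w'$ again satisfies~\eqref{ass}. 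With these in hand I would fix $\psi,\S,\T$ and prove $\left(\Lam{Y}{w}\right)\psi\in\T$ by $\wfi{(\SN,\to)}$ on $w$, the inheritance fact guaranteeing that the induction hypothesis is available for every reduct $w'$ of $w$.

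For the (CR3) step I would exhaust the redex positions of the neutral term $\left(\Lam{Y}{w}\right)\psi$. Contracting the whole term by $(\beta^2)$ gives the reduct $w[Y:=\psi]\in\T$ by hypothesis~\eqref{ass}, and by $(\gentop)$ gives $\c^{\varphi[\vec X,Y:=\vec\psi,\psi]}\in\T$ by (CR0) of $\T$. A redex inside $w$ yields $\left(\Lam{Y}{w'}\right)\psi$ with $w\to w'$, which is in $\T$ by the well-founded induction hypothesis. The two remaining cases contract $\Lam{Y}{w}$ itself. If it is an $(\eta^2)$-redex then $w\equiv s'Y$ with $Y\notin\FTV(s')$ and the reduct is $s'\psi\equiv(s'Y)[Y:=\psi]\equiv w[Y:=\psi]$, so this case \emph{collapses} to the $(\beta^2)$-case and is again handled by~\eqref{ass}; this is the crucial point that lets the lemma dispense with any analogue of the side condition~\eqref{arrow2} of Lemma~\ref{lem:weakened sufficient condition T}~\eqref{lambda}, since $(\eta^2)$---unlike $(\etatop)$---is an \emph{ordinary}, not a \emph{top}, $\eta$-rule. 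If $\Lam{Y}{w}$ is a $(\gentop)$-redex then $\varphi[\vec X:=\vec\psi]\isot$ and the reduct is $\c^{\all{Y}{\varphi[\vec X:=\vec\psi]}}\psi\equiv\left(\Lam{Y}{\c^{\varphi[\vec X:=\vec\psi]}}\right)\psi$ by Definition~\ref{def:para star}~\eqref{pi terminal}; this is precisely the term already shown to lie in $\T$ during the (CR0) step of the proof of Lemma~\ref{lem:indeed}, so it too is in $\T$.

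The main obstacle I anticipate is organizational rather than conceptual: making the case analysis genuinely exhaustive and the well-founded induction legitimate, the latter hinging on the two preliminary facts (SN of $w$ and the inheritance of~\eqref{ass} under reduction of $w$). The $(\eta^2)$-reduct looks like the dangerous case---it is exactly what obstructs naive reducibility arguments for extensional systems---but it harmlessly coincides with the $(\beta^2)$-reduct; the only reduct needing an external appeal is the $(\gentop)$-contraction of $\Lam{Y}{w}$, which I discharge by reusing the (CR0) computation of Lemma~\ref{lem:indeed}.
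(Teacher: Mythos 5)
Your proposal is correct and follows essentially the same route as the paper: reduce to showing $(\Lam{Y}{w})\psi\in\T$ via (CR3), obtain SN of $w$ by instantiating hypothesis~\eqref{ass} at $\SN^{Y}$, and run a well-founded induction on $w$ with the same five-case analysis of the reducts, including the collapse of the $(\eta^2)$-case into the $(\beta^2)$-case and the treatment of the $(\gentop)$-contraction of $\Lam{Y}{w}$ via (CR0) of $\RED_{\all{Y}{\varphi}}[\vec{X}:=\vec{\R}]$. Your explicit remark that hypothesis~\eqref{ass} is inherited by reducts of $w$ is a welcome clarification of a step the paper leaves implicit in setting up its well-founded induction, but it does not change the argument.
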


\begin{proof}$\SN^Y$ is an RC, by Lemma~\ref{lem:rc}~\eqref{lem:nonvoid}. By  assumption~\eqref{ass}, 
\begin{align} w \in \RED_\varphi [\vec X, Y := \vec \R, \SN^Y]. \label{ddag}\end{align}
By
  (CR1) of this RC, $w$ is SN.  By Definition~\ref{def:pred}~\eqref{pred:Pi}, we have only to verify: 
  \begin{align}(\Lam{Y}{w}) \psi
  \in\RED_{\varphi}[\vec{X},Y:=\vec{\R},\S ], \mbox{for every
  type $\psi$ and RC $\S$ of type $\psi$}. \label{uuu}
  \end{align}
  The proof is by $\wfi{\left(\left\{ w^{\varphi[\vec X:=\vec\psi]} \mid\ \mbox{\eqref{ddag} holds}\right\},\ \to\right)}$ where $\to$ is the rewrite relation.
  Let $(\Lam{Y}{w}) \psi\stackrel{\Delta}{\to} s$. 
We have five cases. We verify $s\in \T:= \RED_\varphi[\vec{X},Y:=\vec{\R},\S ]$.
 \begin{enumerate}
 \item $\Delta\equiv (\Lam{Y}{w}) \psi$ is a redex of $(\gentop)$: Then 
 $s\equiv \c^{\varphi [\vec X, Y := \vec{\psi}, \psi]}$. By (CR0) of $\T$.

 \item $\Delta\equiv (\Lam{Y}{w}) \psi$ is a redex of $(\beta^2)$:  Then  $s\equiv w[Y:=\psi ]$. By assumption~\eqref{ass}.

 \item $\Delta\equiv (\Lam{Y}{w})$ is a redex of $(\gentop)$: Then $s\equiv \c^{\all{Y}{\varphi[\vec X:=\vec \psi]}}\psi$. 
By (CR0), $\c^{\all{Y}{\varphi[\vec X:=\vec \psi]}}\in \RED_{\all{Y}{\varphi}}[\vec X:=\vec\R]$. Hence  $s\in \T$ by Definition~\ref{def:pred}. 

 \item $\Delta\equiv (\Lam{Y}{w})$ is a redex of $(\eta^2)$: Then this case coincides with the second case.
 
\item Otherwise, for some $w'$, $s\equiv (\Lam{Y}{w'}) \psi$ and $w \to w'$. By the WF induction
hypothesis.
\end{enumerate}
Thus $s\in \T$.
 So the statement~\eqref{uuu} follows from (CR3) of $\T$. \end{proof}

\begin{lemma}[Universal application]\label{lem:uapp} Suppose that
\begin{enumerate}
 \item 
$\varphi,\psi$ are types, $Y$ is a type variable;
\item
a sequence  $\vec{X}$ of distinct type variables $X_1,\ldots,X_m$  contains all  free
type variables of $\varphi[Y:=\psi]$ and those of $\psi$;
\item
$X_i\ne Y$ $(i=1,\ldots,m)$,  
$\vec{\psi}$ is a sequence of types $\psi_1,\ldots,\psi_m$; and
\item
  $\vec\R$ is a sequence of RCs $\R_1,\ldots,\R_m$ of types $\vec\psi$.
\end{enumerate}
  Then~\footnote{\cite[Lemma~14.2.3]{GTL}  corresponding to this lemma has a typo: ``$t V$'' should be ``$t (V[\underline{U}/\underline{X}])$.''}
\begin{align*}w \in
 \RED_{\all{Y}{\varphi}}[\vec{X}:=\vec{\R}]\implies 
 w \left(\psi[\vec{X}:=\vec{\psi}]\right)\in\RED_{\varphi[Y:=\psi ]}[\vec{X}:=\vec{\R}].
 \end{align*}\end{lemma}
 
\begin{proof} By Lemma~\ref{lem:indeed}, $\RED_{\psi} [\vec{X}:=\vec{\R}]$ is an RC of type $\psi[\vec X := \vec \psi]$.
By the premise and  Definition~\ref{def:pred}~\eqref{pred:Pi} with $\psi:=\psi[\vec{X}:=\vec{\psi}]$, 
\begin{align*}w  \left(\psi[\vec{X}:=\vec{\psi}]\right) 
 \in\RED_\varphi   [\vec{X},Y:=\vec{\R},\RED_{\psi} [\vec{X}:=\vec{\R}]\, ].
 \end{align*} So Lemma~\ref{sub} implies the conclusion.\end{proof}
\begin{lemma} \label{lem:q}
 Let $\tau\isot$. Then
\begin{enumerate}
    \item \label{assert:q:1} $\tau$ is not of the form $\cdots\to\cdots\to \varphi$ where $\varphi\not\isot$.
    \item \label{assert:q:2} $\c^\tau$ is defined and $\FV(\c^\tau)=\emptyset.$
\end{enumerate}
  \end{lemma}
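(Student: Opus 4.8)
The plan is to prove both assertions by a single structural induction on $\tau$, driven by an \emph{inversion principle} for Definition~\ref{def:para isot}. The point is that its four clauses introduce types with pairwise distinct outermost constructors --- the constant $\top$, an arrow, a product, and a $\Pi$ --- so the syntactic shape of an $\isot$ type determines exactly which clause introduced it, and therefore forces its immediate subtypes to be $\isot$ as well. Explicitly: if $\tau\isot$ and $\tau\equiv\sigma\to\rho$, only clause~2 can apply, so $\rho\isot$; if $\tau\equiv\tau_1\times\tau_2$, only clause~3, so $\tau_1,\tau_2\isot$; and if $\tau\equiv\all{X}{\tau'}$, only clause~4, so $\tau'\isot$.

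For part~\eqref{assert:q:1} I would first isolate the one-step consequence just noted: an $\isot$ arrow type has an $\isot$ codomain. Writing a top-level arrow type as $\psi_1\to\cdots\to\psi_n\to\varphi$ with $\varphi$ not itself an arrow, a trivial induction on $n$, peeling off one leading arrow at a time and applying this observation, shows that the head $\varphi$ is $\isot$. Hence no $\isot$ type has the shape $\cdots\to\cdots\to\varphi$ with $\varphi\not\isot$, which is exactly the assertion.

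For part~\eqref{assert:q:2} I would run the same induction, now pairing each clause of Definition~\ref{def:para isot} with the matching clause of Definition~\ref{def:para star}, together with the base clause $\tau\equiv\top$, whose canonical term $\c^\top$ is a term constant. Definedness is immediate from inversion: it guarantees that the subtypes feeding $\lam{x^\varphi}{\c^{\tau'}}$, $\pairing{\c^{\tau_1}}{\c^{\tau_2}}$, and $\Lam{X}{\c^{\tau'}}$ are again $\isot$, so the recursion of Definition~\ref{def:para star} is well-founded and yields a genuine term. The freeness claim $\FV(\c^\tau)=\emptyset$ then rides along the same induction: $\FV(\c^\top)=\emptyset$ since $\c^\top$ is a constant; the $\lambda$-clause discards the bound $x$; the pairing clause takes the (empty) union; and the $\Lambda$-clause binds no \emph{term} variable, so it leaves the set of free term variables unchanged, hence empty.

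The only delicate point --- and the nearest thing to an obstacle --- is the inversion step underpinning everything, namely that the recursive reading of $\c^\tau$ is unambiguous and terminating. This is clean precisely because $\top$, $\varphi\to\psi$, $\varphi\times\psi$, and $\all{X}{\varphi}$ are pairwise distinct syntactic forms, so the clauses of Definitions~\ref{def:para isot} and~\ref{def:para star} never overlap and each recursive call descends to a proper subtype. Everything beyond this is routine bookkeeping of free variables.
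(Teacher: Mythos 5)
Your proof is correct and takes essentially the same route as the paper, which disposes of the lemma with a one-line ``by induction on $\tau$''; your write-up just makes the inversion on Definitions~\ref{def:para isot} and~\ref{def:para star} explicit. No gaps.
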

\begin{proof} By induction on $\tau$.\end{proof}

The following is the counterpart of Lemma~\ref{lem:rel}:
\begin{lemma}\label{lem:closure}
In $\prts$, for every star-free term $t$, 
\begin{enumerate}
\item \label{assert:type subst} a term $t[X_1,\ldots,X_m :=\psi_1,\ldots,\psi_m]$ is star-free  for all distinct type variables $X_1,\ldots,X_m$ and for all types $\psi_1,\ldots,\psi_m$; and 
\item \label{assert:term subst} a term $t[x_1^{\varphi_1},\ldots,x_n^{\varphi_n}:=u_1^{\varphi_1},\ldots,u_n^{\varphi_n}]$ is star-free  for all distinct  variables $x_1^{\varphi_1},\ldots,x_n^{\varphi_n}$ and for all star-free terms $u_1^{\varphi_1},\ldots,u_n^{\varphi_n}$. 
\end{enumerate} 
\end{lemma}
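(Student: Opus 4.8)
The plan is to reduce both statements to the single observation that, in $\prts$, a term is star-free in the sense of Definition~\ref{def:can free} if and only if the term constant $\c^\top$ does not occur in it. First I would establish this equivalence. One direction is immediate: taking $\tau=\top\isot$, a star-free term has in particular no subterm $\c^\top$, i.e.\ no occurrence of $\c^\top$. For the converse I would show, by induction on $\tau$ using Definition~\ref{def:para star} and Lemma~\ref{lem:q}~\eqref{assert:q:2}, that every canonical term $\c^\tau$ with $\tau\isot$ contains an occurrence of $\c^\top$. The base case is $\c^\top$ itself, and each of the three inductive clauses, $\c^{\varphi\to\tau}\equiv\lam{x^\varphi}{\c^\tau}$, $\c^{\tau_1\times\tau_2}\equiv\pairing{\c^{\tau_1}}{\c^{\tau_2}}$, and $\c^{\all{X}{\tau}}\equiv\Lam{X}{\c^\tau}$, propagates the occurrence of $\c^\top$ from a smaller canonical term into the larger one. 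Consequently a term that is free of $\c^\top$ can have no subterm $\c^\tau$ with $\tau\isot$, which is exactly star-freeness.

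Granting this equivalence, both parts become routine inductions on the structure of $t$, following the pattern of Lemma~\ref{lem:rel}~\eqref{assert:term subst T}. For \eqref{assert:type subst}, the point is that the type substitution $[\,\vec{X}:=\vec{\psi}\,]$ acts only on type annotations and on the type arguments of universal applications; it neither creates nor deletes the term constant $\c^\top$, so the occurrences of $\c^\top$ in $t[\vec{X}:=\vec{\psi}]$ correspond exactly to those in $t$. Hence if $t$ is star-free, hence free of $\c^\top$, then $t[\vec{X}:=\vec{\psi}]$ is again free of $\c^\top$, and therefore star-free by the equivalence. For \eqref{assert:term subst}, every occurrence of $\c^\top$ in $t[\vec{x}:=\vec{u}]$ descends either from an occurrence in $t$ or from an occurrence inside some substituted $u_i$; since $t$ and each $u_i$ are star-free, neither source exists, so $t[\vec{x}:=\vec{u}]$ is free of $\c^\top$ and thus star-free.

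I do not expect a genuine obstacle here, since the statement is a bookkeeping lemma; the whole force of the argument is front-loaded into the equivalence between being star-free and being free of $\c^\top$, and in particular into the short induction showing that each canonical $\c^\tau$ contains a $\c^\top$. Once that is recorded, the two substitution claims follow in a line or two each, because neither type substitution nor substitution by star-free terms can manufacture the constant $\c^\top$ out of material that lacks it. I would therefore state the equivalence as a preliminary remark and then dispatch \eqref{assert:type subst} and \eqref{assert:term subst} directly from it.
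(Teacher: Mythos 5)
Your proposal is correct, but it is organized differently from the paper's proof, and the comparison is worth recording. The paper proves Lemma~\ref{lem:closure} by a direct structural induction on $t$: it lists the cases (variable, abstraction, application, $\Lam{Y}{w}$, $w\psi$), notes that $t$ cannot be the term constant since that would make it $\c^\top$, and in each case argues from the induction hypotheses that neither $t\Theta$ nor $t\theta$ is ``a star term.'' That last step is exactly the point your preliminary remark makes explicit: one must rule out that a substitution instance of a non-canonical term \emph{becomes} some $\c^\tau$ with $\tau\isot$ (e.g.\ that $\Lam{Y}{w\Theta}$ coincides with $\Lam{Y}{\c^\tau}$, or that a type substitution sending $X$ to $\top$ turns a body into a canonical one). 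The paper handles this implicitly and rather tersely; you handle it by first proving, by induction on $\tau$ via Definition~\ref{def:para star}, that every canonical term $\c^\tau$ contains an occurrence of $\c^\top$, so that star-freeness in the sense of Definition~\ref{def:can free} is equivalent to the absence of the single constant $*^\top$. Once that characterization is in place, both substitution claims reduce to the observation that neither a type substitution nor a substitution by $*^\top$-free terms can manufacture an occurrence of $*^\top$, and the case analysis essentially evaporates. What your route buys is robustness and transparency: the only nontrivial content is isolated in one small induction on $\tau$, and the ``no new canonical subterms are created'' worry is discharged once and for all rather than case by case. What the paper's route buys is that it needs no auxiliary characterization and stays in the same inductive format as the neighbouring lemmas (compare Lemma~\ref{lem:rel}~\eqref{assert:term subst T}). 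Both arguments are sound; yours is, if anything, the more self-contained justification of the step the paper leaves to the reader.
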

\begin{proof}    By induction on $t$. Let 
\begin{align*}
&\Theta=[X_1,\ldots,X_m :=\psi_1,\ldots,\psi_m]\\
&\theta=[x_1^{\varphi_1},\ldots,x_n^{\varphi_n}:=u_1^{\varphi_1},\ldots,u_n^{\varphi_n}].
\end{align*}
  The proof proceeds by cases according to the form of $t$. By the definition of  $\prts$, $t$ is not a term constant, because otherwise $t$ is $\c^\top$. 
 \begin{itemize}

\item $t$ is a  variable: Then \eqref{assert:type subst} is by Lemma~\ref{lem:q}~\eqref{assert:q:2}. \eqref{assert:term subst} is clear.

\item $t$ is
an abstraction, or an application: By induction hypotheses.

\item $t\equiv \Lam{Y}{w}$ such that $X_i\not\equiv Y$ and $Y$ does not occur free in any $\psi_i$:
 By induction hypothesis, $w\Theta$ and $w\theta$ are star-free. 
 So,  none of $t\Theta\equiv \Lam{Y}{w\Theta}$ and $t\theta\equiv \Lam{Y}{w\theta}$ is a star term. 

\item $t\equiv w\psi$: Then, by induction hypothesis, $w  \Theta $ and $w\theta$ are star-free. Hence, none of $t \Theta  \equiv w  \Theta (\psi  \Theta )$ and $t \theta \equiv (w\theta) \psi$ is a star-term.
\end{itemize}
This completes the proof of Lemma~\ref{lem:closure}.
\end{proof} 
\begin{lemma}\label{key statement for arrow for poly}Suppose that
\begin{enumerate}
\item \label{premise:star-free} $v^\psi$ is a star-free;
\item a sequence $\vec X$  of distinct type variables $X_1,\ldots,X_m$ contains all free type variables of  $v^\psi$; and
\item  $\vec\R$ is a sequence of  RCs $\R_1,\ldots,\R_m$  of
	  types $\vec\psi\equiv\psi_1,\ldots,\psi_m$; 
\end{enumerate}
If $v[x^\varphi:=u^\varphi]\in \RED_\psi [\vec{X}:=\vec{\R}] $ for every  star-free $u^\varphi\in \RED_\varphi [\vec{X}:=\vec{\R}]$, 
then $\lam{x^\varphi}{v^\psi}\in \RED_{\varphi\to\psi} [\vec{X}:=\vec{\R}]$ and star-free.\end{lemma}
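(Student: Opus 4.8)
The plan is to mimic the proof of the simply-typed counterpart, Lemma~\ref{key statement for arrow}, replacing ``reducible'' by ``belongs to the relevant parametric reducibility set'' and ``$\c$-free'' by ``star-free.'' Write $\R:=\RED_\varphi[\vec X:=\vec\R]$ and $\S:=\RED_\psi[\vec X:=\vec\R]$. By Lemma~\ref{lem:indeed} both $\R$ and $\S$ are reducibility candidates, and by the $\circ=\to$ clause of Definition~\ref{def:pred} we have $\RED_{\varphi\to\psi}[\vec X:=\vec\R]=\R\to\S$. Hence it suffices to verify the two hypotheses of Lemma~\ref{lem:weakened sufficient condition poly}~\eqref{lambda poly} for $\R$ and $\S$: that $v[x:=u]\in\S$ for \emph{every} (possibly non-star-free) $u\in\R$, and the hypothesis concerning the rule $(\etatop)$.

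For the first hypothesis I would upgrade the given premise, which only covers star-free arguments, to all of $\R$ exactly as in Lemma~\ref{key statement for arrow}. Given an arbitrary $w\in\R$, choose a fresh variable $z^\top$ not occurring in $w$ and set $u:=w[\c^\top:=z^\top]$. Then $u$ is star-free, since every occurrence of $*^\top$ has been replaced by $z^\top$; moreover $u\in\R$ because $\R$ is variant-closed, which is precisely (CR0) of the reducibility candidate $\R$ furnished by Lemma~\ref{lem:indeed}. Contracting each $z^\top$ back to $*^\top$ by the rule $(\gentop)$ gives $u\rtc w$, whence $v[x:=u]\rtc v[x:=w]$ by monotonicity of substitution under reduction. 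Applying the premise to the star-free term $u\in\R$ yields $v[x:=u]\in\S$, and then (CR2) of $\S$ propagates this along $v[x:=u]\rtc v[x:=w]$ to give $v[x:=w]\in\S$. This establishes the first hypothesis for every $w\in\R$.

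The second hypothesis of Lemma~\ref{lem:weakened sufficient condition poly}~\eqref{lambda poly} is vacuous here: it is triggered only when $v\equiv w_0\,\c^\varphi$ for some $w_0$ with $x\notin\FV(w_0)$, but such a $v$ would contain the star subterm $\c^\varphi$ and hence fail to be star-free, contradicting premise~\eqref{premise:star-free}. Therefore Lemma~\ref{lem:weakened sufficient condition poly}~\eqref{lambda poly} applies and gives $\lam{x^\varphi}{v^\psi}\in\R\to\S=\RED_{\varphi\to\psi}[\vec X:=\vec\R]$. Finally $\lam{x^\varphi}{v^\psi}$ is star-free because $v$ is star-free and an abstraction is a star term only when its body is the canonical term $\c^\tau$ of the corresponding terminal type, which the star-free $v$ is not.

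The argument is almost entirely a transcription of the simply-typed Lemma~\ref{key statement for arrow}; the only genuinely polymorphic ingredient, and the step I expect to need the most care, is the appeal to Lemma~\ref{lem:indeed} to know that the parametric set $\R=\RED_\varphi[\vec X:=\vec\R]$ is itself a \emph{variant-closed} reducibility candidate. That variant-closedness is exactly what licenses replacing an arbitrary $w\in\R$ by a star-free $u\in\R$ with $u\rtc w$, so without it the upgrade from star-free arguments to all arguments would break down.
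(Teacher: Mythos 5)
Your proposal is correct and follows essentially the same route as the paper's proof: upgrade the premise from star-free arguments to all of $\RED_\varphi[\vec X:=\vec\R]$ via variant-closedness (you merely unfold the definition by exhibiting $u\equiv w[\c^\top:=z^\top]$ with $u\rtc w$, where the paper invokes it abstractly), then conclude with (CR2) and Lemma~\ref{lem:weakened sufficient condition poly}~\eqref{lambda poly}, whose $(\etatop)$ hypothesis is discharged by the star-freeness of $v$. No gaps.
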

\begin{proof}
 Let $w^\varphi\in  \RED_\varphi [\vec{X}:=\vec{\R}]$. Because $\RED_\varphi [\vec{X}:=\vec{\R}]$ is variant-closed, there is a star-free  term $u\in   \RED_\varphi [\vec{X}:=\vec{\R}]$ such that $u\stackrel{*}{\to} w$.  
We have $v[x:=u]\stackrel{*}{\to} v[x:=w]$.
So, by the premise $v[x:=u]\in \RED_\psi [\vec{X}:=\vec{\R}] $ and (CR2), we have $v[x:=w]\in \RED_\psi [\vec{X}:=\vec{\R}] $. 
 Because of the premise~\eqref{premise:star-free} , Lemma~\ref{lem:weakened sufficient condition poly}~\eqref{lambda poly} implies $\lam{x^\varphi}{v^\psi}\in  \RED_{\varphi\to\psi} [\vec{X}:=\vec{\R}]$, while $\lam{x}{v}$ is star-free.
\end{proof}

 \begin{theorem}[Relativized Reducibility]\label{prop:poly}
 Suppose that
\begin{enumerate}
\item	 \label{premise:star free poly} $t^\varphi$ is a \emph{star-free} term;

\item a  sequence of distinct variables $x_1^{\varphi_1},\ldots, x_n^{\varphi_n}$  contains all  free variables of $t^\varphi$;

\item a sequence $\vec X$  of distinct type variables $X_1,\ldots,X_m$ contains all free type variables of  $t$;

\item  $\vec\R$ is a sequence of  RCs $\R_1,\ldots,\R_m$  of
	  types $\vec\psi\equiv\psi_1,\ldots,\psi_m$; 
	  
\item  \label{premise:afo} $u_i^{\varphi_i[\vec{X}:=\vec{\psi}]}$ is in  
$\RED_{\varphi_i} [\vec{X}:=\vec{\R}]$ and is  \emph{star-free} ($i=1,\ldots,n$); and

\item $t[\vec{X}:=\vec{\psi}][\vec{x}:=\vec{u}]$ is the term
  obtained from $t[\vec X:=\vec\psi]$ by simultaneously substitution of $u_1^{\varphi_1[\vec X := \vec \psi]},\ldots,u_n^{\varphi_n[\vec X := \vec \psi]}$ into $x_1^{\varphi_1[\vec X := \vec \psi]},\ldots, x_n^{\varphi_n[\vec X := \vec \psi]}$. 
 \end{enumerate} 
  Then 
$t[\vec{X}:=\vec{\psi}][\vec{x}:=\vec{u}]$  is in $\RED_\varphi
 [\vec{X}:=\vec{\R}]$.
\end{theorem}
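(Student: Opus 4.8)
The plan is to argue by induction on the structure of the star-free term $t$, in exact parallel with the proof of Theorem~\ref{prop} but with two additional cases for the polymorphic constructs. Since the statement is already universally quantified over the contexts $\vec X,\vec\psi,\vec\R$ and over the term variables and substituents, the induction hypothesis for any structural subterm of $t$ is available with an \emph{arbitrary} admissible choice of these data. Abbreviate the combined substitution by $\theta:=[\vec X:=\vec\psi][\vec x:=\vec u]$, so that the goal reads $t\theta\in\RED_\varphi[\vec X:=\vec\R]$. Premise~\eqref{premise:star free poly} that $t$ is star-free rules out $t\equiv\c^\top$, so by the grammar of $\prts$ there are seven cases: $t$ a variable, a projection $\pi_i w$, a pairing $\pairing{u'}{v'}$, an application $w\,v'$, an abstraction $\lam{y}{w}$, a universal abstraction $\Lam{Y}{w}$, and a universal application $w\,\psi'$.

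The five non-polymorphic cases go through as in Theorem~\ref{prop}, reading ``reducible'' as ``a member of the relevant $\RED$.'' If $t\equiv x_i$, then $t\theta\equiv u_i$ and the conclusion is premise~\eqref{premise:afo}. The projection and application cases are immediate from the clauses of Definition~\ref{def:pred} that compute $\RED_{\varphi'\times\varphi''}$ and $\RED_{\varphi'\to\varphi''}$ through the operations of Definition~\ref{def:miscrc}, applied to the induction hypotheses for the immediate subterms. For a pairing I would invoke Lemma~\ref{lem:weakened sufficient condition poly}~\eqref{pairing poly}: the induction hypotheses place $u'\theta$ and $v'\theta$ in the two candidates, and by Lemma~\ref{lem:closure}~\eqref{assert:term subst} both are star-free, whence the two extensionality side-conditions of that lemma hold vacuously, a star-free term never being some $\c^\psi$. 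The abstraction case is exactly Lemma~\ref{key statement for arrow for poly}, whose hypothesis is the induction hypothesis for the body $w$ as its bound variable $y$ ranges over star-free substituents.

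For universal application $t\equiv w\,\psi'$, the induction hypothesis gives $w\theta\in\RED_{\all{Y}{\varphi''}}[\vec X:=\vec\R]$, and Lemma~\ref{lem:uapp} then yields $(w\theta)\bigl(\psi'[\vec X:=\vec\psi]\bigr)\in\RED_{\varphi''[Y:=\psi']}[\vec X:=\vec\R]$, which is $t\theta$ at the correct type $\varphi=\varphi''[Y:=\psi']$. The case I expect to be the crux is universal abstraction $t\equiv\Lam{Y}{w}$, so that $\varphi=\all{Y}{\varphi''}$. Here I first take $Y$ fresh, so that $Y\ne X_i$ and $Y$ occurs free neither in $\vec\psi$ nor in any $u_i$. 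Given an arbitrary type $\psi$ and RC $\S$ of type $\psi$, I apply the induction hypothesis to $w$ in the \emph{extended} type context $\vec X,Y$ with types $\vec\psi,\psi$ and candidates $\vec\R,\S$; the premises survive the extension because $Y\notin\FTV(\varphi_i)$ forces $\RED_{\varphi_i}[\vec X,Y:=\vec\R,\S]=\RED_{\varphi_i}[\vec X:=\vec\R]$, so each $u_i$ stays in the right candidate and star-free. The induction hypothesis then gives $w[\vec X,Y:=\vec\psi,\psi][\vec x:=\vec u]\in\RED_{\varphi''}[\vec X,Y:=\vec\R,\S]$, and by the freshness of $Y$ this term is identical to $(w\theta)[Y:=\psi]$. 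This is precisely hypothesis~\eqref{ass} of Lemma~\ref{lem:uabs} with its $w$ instantiated to $w\theta$, so that lemma delivers $\Lam{Y}{w\theta}=t\theta\in\RED_{\all{Y}{\varphi''}}[\vec X:=\vec\R]$. The genuinely delicate points are thus the freshness bookkeeping for $Y$ and the substitution-commutation identity $w[\vec X,Y:=\vec\psi,\psi][\vec x:=\vec u]\equiv(w\theta)[Y:=\psi]$, not any fresh use of the relativization, which has already been packaged into Lemmas~\ref{key statement for arrow for poly} and~\ref{lem:weakened sufficient condition poly}.
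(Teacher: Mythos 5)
Your proposal is correct and follows essentially the same route as the paper's own proof: structural induction on the star-free term $t$, using Lemma~\ref{lem:closure} for star-freeness of the substituted term, Lemma~\ref{lem:weakened sufficient condition poly}~\eqref{pairing poly} and Lemma~\ref{key statement for arrow for poly} for pairing and abstraction, and Lemmas~\ref{lem:uabs} and~\ref{lem:uapp} for the polymorphic cases, with the induction hypothesis invoked in the extended type context $\vec X,Y$ for universal abstraction exactly as in the paper. Your explicit remarks on why the pairing side conditions hold vacuously for star-free components and why the premises survive the extension of the type context (since $Y\notin\FTV(\varphi_i)$) are details the paper leaves implicit, but they do not change the argument.
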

\begin{proof} By the premise~\eqref{premise:star free poly}, the premise~\eqref{premise:afo} and Lemma~\ref{lem:closure}, \begin{align}t[\vec{X}:=\vec{\psi}][\vec{x}:=\vec{u}]\ \mbox{is star-free.}\label{uuvv}\end{align}

By induction on $t$.  The proof proceeds by cases according to the form of $t$. By the premise~\eqref{premise:star free poly}, $t$ is not a star term. Then we have five cases.
\begin{enumerate}
\item $t$ is a variable $x_i$: Then $t[\vec X:=\vec\psi][\vec x:=\vec u]\equiv u_i^{\varphi_i[\vec X:=\vec \psi]}$ is in $\R_i$ by the premise~\eqref{premise:afo}.

\item $t$ is a pairing: We can prove this case, similarly as in the proof of Theorem~\ref{prop}, by using \eqref{uuvv} and Lemma~\ref{lem:weakened sufficient condition poly}~\eqref{pairing poly}.

\item $t$ is a $\lambda$-abstraction: We can prove this case, similarly as in the proof of Theorem~\ref{prop}, by using \eqref{uuvv} and Lemma~\ref{key statement for arrow for poly}.

\item   $t\equiv w_1^{\sigma_1}w_2^{\sigma_2}$ where $\sigma_1\equiv \sigma_2\to\varphi$:
If a free type variable occur in $w_i$ $(i=1,2)$, then it does so in $w_1 w_2$. So,
 by induction hypotheses, $w_i [\vec{X}:=\vec{\psi}][\vec{x}:=\vec{u}] \in \RED_{\sigma_i}
 [\vec{X}:=\vec{\R}]$. By $\sigma_1\equiv \sigma_2\to\varphi$, Definition~\ref{def:pred} and Definition~\ref{def:miscrc}, we have $t[\vec{X}:=\vec{\psi}][\vec{x}:=\vec{u}] \in \RED_\varphi
 [\vec{X}:=\vec{\R}]$.

 \item  $t\equiv  (\Lam{Y}{w})^ {\all{Y}{\sigma}}$ where $X_i\neq Y$ and $Y$ does not occur free in any $\varphi_i[\vec X:=\vec\psi]$: Then
 by the induction hypothesis, for any type $\psi$ and any RC $\S$ of $\psi$, 
$ w [\vec{X},Y:=\vec{\psi},\psi][\vec{x}:=\vec{u}]$ is in $\RED_\sigma
 [\vec{X},Y:=\vec{\R},\S]$. 
 Since $Y$ occurs in no $\vec u$ without loss of generality, we have $w [\vec{X}:=\vec{\psi}][\vec{x}:=\vec{u}][Y:=\psi]\in \RED_\sigma
 [\vec{X},Y:=\vec{\R},\S]$.
 By Lemma~\ref{lem:uabs}, $(\Lam{Y}{w}) [\vec{X}:=\vec{\psi}][\vec{x}:=\vec{u}] $ is in $\RED_{\all{Y}{\sigma}}
 [\vec{X}:=\vec{\R}]$. 
 
 \item   $t\equiv  w^ {\all{Y}{\sigma}} \psi$: Then by the induction hypothesis,
$ w [\vec{X}:=\vec{\psi}][\vec{x}:=\vec{u}]\in\RED_{\all{Y}{\sigma}}
 [\vec{X}:=\vec{\R}]$. By Lemma~\ref{lem:uapp}, $ w [\vec{X}:=\vec{\psi}][\vec{x}:=\vec{u}] \left(\psi [\vec X:=\vec \psi ]\right) 
 \in \RED_{\sigma[Y:=\psi]}
 [\vec{X}:=\vec{\R}]$. This term is just 
 $(w \psi)[\vec{X}:=\vec{\psi}][\vec{x}:=\vec{u}]$.
 \end{enumerate}
This completes the proof of  
Theorem~\ref{prop:poly}. 
\end{proof} 
\begin{definition}\label{def:poly:reducibility} A term $t^\varphi$ is called \emph{reducible}, if
for some sequence 
of distinct type variables $X_1,\ldots,X_m$ containing  the free type variables of a type $\varphi$, 
\begin{align*}
t^\varphi\in\RED_\varphi [X_1,\ldots,X_m:=\SN^{X_1},\ldots,\SN^{X_m}].\end{align*}\end{definition}

\begin{theorem}\label{poly:reducibility theorem}
Any  term $t^\varphi$ is in $\RED_\varphi [X_1,\ldots,X_m:=\SN^{X_1},\ldots,\SN^{X_m}]$.\end{theorem}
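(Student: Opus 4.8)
The plan is to transcribe the proof of Theorem~\ref{cd:reducibility theorem} into the parametric setting: instantiate every reducibility candidate by an $\SN$-candidate $\SN^{X_i}$, and invoke the polymorphic relativized reducibility theorem Theorem~\ref{prop:poly} in the role that Theorem~\ref{prop} played before. Fix a term $t^\varphi$ and choose a sequence $X_1,\ldots,X_m$ of distinct type variables containing \emph{all} free type variables of $t$ (hence in particular those of $\varphi$). The goal is then to place $t$ in $\RED_\varphi[X_1,\ldots,X_m:=\SN^{X_1},\ldots,\SN^{X_m}]$ for this sequence, which is exactly what Definition~\ref{def:poly:reducibility} asks for.

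First I would pass to a star-free reduct. Pick a variable $z^\top$ not occurring in $t$ and set $\tilde t:=t[\c^\top:=z^\top]$. Since every canonical term $\c^\tau$ with $\tau\isot$ contains an occurrence of $\c^\top$ (an easy induction on $\tau$ from Definition~\ref{def:para star}, the base case being $\c^\top=\c^\top$), a term is star-free precisely when it contains no $\c^\top$; as $\tilde t$ contains no $\c^\top$, it is star-free in the sense of Definition~\ref{def:can free}. Moreover $\tilde t\rtc t$: each occurrence of $z^\top$ is a $(\gentop)$-redex, because $\top\isot$ and $z^\top\not\equiv\c^\top$, contracting to $\c^\top$; contracting them all recovers $t$. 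This is the polymorphic counterpart of Lemma~\ref{lem:rel}~\eqref{rel}.

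Next I would apply Theorem~\ref{prop:poly} to $\tilde t$ using the identity type substitution together with the identity term substitution. Concretely, take $\psi_i:=X_i$ and $\R_i:=\SN^{X_i}$, each an RC by Lemma~\ref{lem:rc}~\eqref{lem:nonvoid}; and take $u_j:=x_j$ ranging over all free variables of $\tilde t$, which include the fresh $z^\top$. Each such variable is star-free and lies in the set $\RED_{\varphi_j}[X_1,\ldots,X_m:=\SN^{X_1},\ldots,\SN^{X_m}]$ by (CR4), since that set is an RC by Lemma~\ref{lem:indeed} (for $z^\top$ this set is just $\SN^\top$). Because both substitutions are the identity, $t[\vec X:=\vec\psi][\vec x:=\vec u]$ is literally $\tilde t$, so Theorem~\ref{prop:poly} yields $\tilde t\in\RED_\varphi[X_1,\ldots,X_m:=\SN^{X_1},\ldots,\SN^{X_m}]$. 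Finally, this set is an RC (Lemma~\ref{lem:indeed}), hence closed under reduction by (CR2); combined with $\tilde t\rtc t$ this gives $t\in\RED_\varphi[X_1,\ldots,X_m:=\SN^{X_1},\ldots,\SN^{X_m}]$, as required.

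I do not expect a genuine obstacle: the whole force of the relativization has already been absorbed into the variant-closedness clause of (CR0) and into Theorem~\ref{prop:poly}, so the remaining argument is only the short ``reduce to a star-free term, then transfer back along $\rtc$'' pattern. The two points that deserve care are purely bookkeeping: checking that the identity type and term substitutions meet every hypothesis of Theorem~\ref{prop:poly} (in particular that the newly introduced $z^\top$ is supplied as a star-free member of $\SN^\top$), and noting that enlarging $\vec X$ to cover all free type variables of $t$ --- which may properly exceed those of $\varphi$, e.g.\ through a type application whose argument type vanishes from the result --- is harmless, because $\RED_\varphi[\vec X:=\vec{\R}]$ depends only on the candidates assigned to the free type variables of $\varphi$.
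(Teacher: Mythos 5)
Your proposal is correct and follows essentially the same route as the paper: apply Theorem~\ref{prop:poly} with the identity type substitution ($\psi_j:=X_j$, $\R_j:=\SN^{X_j}$) and the identity term substitution ($u_i:=x_i$, justified by (CR4)) to conclude that every star-free term is reducible, then pass from $t[\c^\top:=z^\top]\rtc t$ to $t$ via (CR2). Your added justifications --- that $t[\c^\top:=z^\top]$ is genuinely star-free because every canonical term contains $\c^\top$, and that the choice of $\vec X$ is harmless --- are correct bookkeeping that the paper leaves implicit.
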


\begin{proof}Amy  star-free term is reducible, by
(CR4) and by Theorem~\ref{prop:poly} with $u_i^{\varphi_i}:=x_i^{\varphi_i}$,
 $\psi_j:=X_j$ and $\R_j:=\SN^{X_j}$. 
 Hence a star-free term $t [ \c^\top := z^\top ]$ is reducible for some variable $z^\top$. Because $t [ \c^\top := z^\top ] \rtc t$.  $t$ is reducible, by (CR2). \end{proof}

\begin{corollary}  $\prts$ satisfies SN and CR.
\end{corollary}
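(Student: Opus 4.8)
The plan is to obtain the two assertions from what has already been proved: SN falls straight out of the relativized reducibility candidate method, and CR is then recovered from SN by Newman's Lemma.

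For SN, I would reason as follows. By Theorem~\ref{poly:reducibility theorem}, every term $t^\varphi$ of $\prts$ lies in $\RED_\varphi[X_1,\ldots,X_m:=\SN^{X_1},\ldots,\SN^{X_m}]$, where $X_1,\ldots,X_m$ is a sequence of distinct type variables containing the free type variables of $\varphi$. By Lemma~\ref{lem:indeed}, this set is a reducibility candidate of the corresponding type (here with $\psi_j\equiv X_j$ and $\R_j:=\SN^{X_j}$), so it satisfies the defining clause (CR1) of Definition~\ref{def:RC}, namely that each of its members is SN. Therefore every term of $\prts$ is SN, which is the SN half of the corollary.

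For CR, the plan is to apply Newman's Lemma~\cite{Baader}: a strongly normalizing, locally confluent relation is confluent. SN is now in hand, so it remains to establish local confluence (WCR), i.e.\ that every peak $t_1\leftarrow t\to t_2$ closes as $t_1\rtc t_0\ltc t_2$ for some $t_0$. This is a finite critical-pair check over the schemata $(\beta),(\pi_1),(\pi_2),(\eta),(SP),(\gentop),(\etatop),(\SPtop1),(\SPtop2),(\beta^2),(\eta^2)$, the overlaps beyond the pure simply-typed $\beta\eta$-calculus being those generated by the $\top$-rules and by the polymorphic pair $(\beta^2),(\eta^2)$. Alternatively, since these overlaps are of exactly the same nature as those settled for $\rts$, local confluence may simply be cited from \cite{CD}.

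The main obstacle is this local-confluence step, not the SN step. The delicate peaks are precisely the ones the paper has already highlighted: the non-left-linear $(SP)$-rule, and the schema $(\gentop)$, which can fire on a subterm that a second rule is rewriting at the same time --- recall from \eqref{SPpar} that $(\gentop)$ sends a one-step $(SP)$-reduct to what is really a \emph{two-step} reduct, so such a peak must be joined through a canonical term rather than syntactically. I would organize the case analysis by whether the type of the overlapping subterm is in $\isomorphicT$: whenever a subterm has type $\tau\isot$, both competing reducts can be driven to the unique canonical term $\c^\tau$ (which is defined and closed, $\FV(\c^\tau)=\emptyset$, by Lemma~\ref{lem:q}), and this is exactly why Curien--Di Cosmo's completion rules $(\etatop),(\SPtop1),(\SPtop2)$ restore the joinability of the peaks that fail for the naive system in \eqref{SPpar}. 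Once all critical pairs are joined, Newman's Lemma together with the SN established above yields CR, completing the proof.
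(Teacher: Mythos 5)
Your proposal is correct and follows essentially the same route as the paper: SN is read off from (CR1) via Theorem~\ref{poly:reducibility theorem}, and CR is obtained from SN plus local confluence by Newman's lemma. The only difference is that the paper does not redo the critical-pair analysis you sketch but simply cites weak confluence of $\prts$ from \cite[Proposition~2.5]{CD} --- the fallback you yourself mention --- so the two arguments coincide.
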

\begin{proof}  SN
 follows from (CR1) and Theorem~\ref{poly:reducibility theorem}. $\prts$ is weakly confluent by \cite[Proposition~2.5]{CD}.  
  So, Newman's lemma~\cite{Newman} implies CR of $\prts$. \end{proof}

\subsection{Parametric terminal types}

According to \cite{Has}, in the parametric polymorphism, a type $\all{X}{(X\to X)}$ is a terminal type. 
We will add the following clauses to the definition of $\isomorphicT$ and $\c^\varphi$ of $\prts$.
\begin{align}
\label{sharp} \begin{cases}
\all{X}{(X\to X)}\isot\\
\c^{\all{X}{(X\to X)}}:\equiv \Lam{X}{\lam{x^X}{x^X}}
\end{cases}
\end{align}
Then, for a suitable condition,
\begin{align}\label{obst}
\begin{array}{c c c}
\left(\Lam{X}{t^{X \to X}}\right)\varphi & \to_{\beta^2} & t^{\varphi \to \varphi}\\
\downarrow_\gentop & & \\
\left(\Lam{X}{\lam{x^X}{x^X} }\right)\varphi&\to_{\beta^2}& \lam{x^\varphi}{x^\varphi }.
\end{array}
\end{align}
If $\varphi\not\isot$, then it may not be the case $t^{\varphi\to\varphi} \rtc \lam{x^\varphi}{x^\varphi }.$ 
So, $\prts+\eqref{sharp}$ may be not confluent.

The following rewrite rule schema resolves the confluence problem~\eqref{obst}:
\begin{align*}
&(\gentop_{aux})\quad 
t ^{\varphi \to \varphi}
 \to 
\lam{x^\varphi}{x^\varphi},\\
&\mbox{provided}\\
&\mbox{$t$ is of the form $s^{X \to X} [X:=\varphi]$,} \\
&\mbox{$\varphi\not\isot$,} \\
&\mbox{$X$ does not occur free in the type of any free term variable of $s$, and}\\
&\mbox{$t\not\equiv\lam{x^\varphi}{x^\varphi}$.}
 \end{align*}
The rewrite rule schema $(\gentop_{aux})$ can be regarded as an `instance' of a rewrite rule 
\begin{align*}(\gentop)\quad t^{\all{X}{(X\to X)}} \to \c^{\all{X}{(X\to X)}} \quad\mbox{(the left-hand side is not the right-hand side)}.\end{align*}

If we attempt to prove  the SN of $\prts+\eqref{sharp}+(\gentop_{aux})$ by a relativized reducibility candidate method of Section~\ref{sec:polymorphism}, we require 
\begin{align}
\mbox{If $\R$ is an RC of type $\varphi\to\varphi$, then $\lam{x^\varphi}{x^\varphi}\in\R$.} \label{hosi}
\end{align}
It is because $(\gentop_{aux})$ will cause, at least, the following new cases in the proof of Lemma~\ref{lem:weakened sufficient condition poly}:
\begin{itemize}
\item ``Case $\Delta\equiv\pi_1\pairing{u}{v}$ is a redex of
      $(\gentop_{aux})$, $s\equiv \lam{x^\theta}{x^\theta}$ and
      $\varphi_1=\theta\to\theta$ for some type $\theta$.''
\item ``Case $\Delta\equiv(\lam{x}{v}) u$ is a redex of $(\gentop_{aux})$, $s\equiv \lam{x^\theta}{x^\theta}$ and $\varphi_2=\theta\to\theta$ for some type $\theta$.''    
\end{itemize}

If we add the property~\eqref{hosi} in the definition of RC, then we cannot prove ``If $\R,\S$ are RCs of type $\varphi$, then $\R\to\S$ is an RC of type $\varphi\to\varphi$.''
It is because $\R\subseteq\S$ is not always available.
 
\section*{Acknowledgements.} The author thanks K.~Fujita, H.~Goguen, D.~Kesner, K.~Kikuchi, C.~Kop, F.~Pfenning,
Y.~Toyama, and H.~Yokouchi.  The author owes  C.~Kop for the observation on HORPO and WANDA.  This research is supported by a grant of
Graduate School of Science, Tohoku University, Japan.

\appendix

\section{Type-directed expansions}\label{subsec:expansion}

 For the typed $\lambda$-calculus, let a binary relation
 $\to_{\reta}$ ($\to_{\rSP}$) replace a neutral subterm
 occurrence in a non-elimination context with the
 $\eta$~($SP$)-expansion~\cite{Min79}. Neither $\to_{\reta}$ nor
 $\to_{\rSP}$ is stable under contexts.
We call the relation
 $\to:=\to_{\beta\l\r T {\reta}\rSP}$ \emph{Mints' reduction}, as Mints introduced it in \cite{Min77,Min79}. Mints' reduction
 generates the equational theory $\lambda\beta\eta\pi*$,
 and is SN+CR~(\cite{A93,JG95} to cite a few). In \cite{A93}, the author presented a divide-and-conquer lemma to infer
 SN+CR  property of
a reduction system from that property of its subsystems. 
\begin{lemma}[\protect{\cite{A93}}]\label{lem:divide-and-conquer}
If two binary relations $\to_R$ and $\to_S$ on a set $U\ne\emptyset$ have SN+CR property, then so does $\to_{SR}$, provided that we have
\begin{align*}
\forall u,v\in U\left(u\to_S v \implies u^R\stackrel{+}{\to_S} v^R\right),
\end{align*}
where $u^R$ and $v^R$ are the $\to_R$-normal forms of $u$ and $v$ respectively, and $\stackrel{+}{\to_S}$ is the transitive closure of $\to_S$.
\end{lemma}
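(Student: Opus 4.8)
The plan is to read $\to_{SR}$ as the union $\to_R\cup\to_S$, to establish SN directly by projecting an alleged infinite reduction onto $\to_R$-normal forms, and then to obtain CR from Newman's lemma~\cite{Newman} by verifying local confluence. Throughout I would exploit that $\to_R$ is SN+CR, so that every $u\in U$ has a unique $\to_R$-normal form $u^R$, and in particular $u\to_R v$ forces $u^R=v^R$.

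For SN, first I would suppose toward a contradiction that there is an infinite reduction $t_0\to_{SR}t_1\to_{SR}\cdots$. Since $\to_R$ is SN, this sequence cannot end with an infinite run of $\to_R$-steps, so it contains infinitely many $\to_S$-steps. Now I would project the sequence by the map $(\cdot)^R$: an $\to_R$-step $t_i\to_R t_{i+1}$ yields $t_i^R=t_{i+1}^R$, whereas an $\to_S$-step $t_i\to_S t_{i+1}$ yields $t_i^R\stackrel{+}{\to_S}t_{i+1}^R$ by the proviso. Reading off the $\to_S$-positions then produces an infinite $\to_S$-reduction among the $\to_R$-normal forms, contradicting SN of $\to_S$. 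Hence $\to_{SR}$ is SN.

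For CR, having SN in hand I would invoke Newman's lemma~\cite{Newman} and only check local confluence of $\to_{SR}$. Suppose $t\to_{SR}t_1$ and $t\to_{SR}t_2$. If both steps are $\to_R$-steps, CR of $\to_R$ supplies a common $\to_R$-reduct, hence a common $\to_{SR}$-reduct; the case of two $\to_S$-steps is symmetric. The decisive case is the mixed one, say $t\to_R t_1$ and $t\to_S t_2$. Here I would combine $t_1^R=t^R$ (from $t\to_R t_1$) with the one-step projection $t^R\stackrel{+}{\to_S}t_2^R$ (from $t\to_S t_2$), obtaining $t_1\to_R^{*}t_1^R=t^R\stackrel{+}{\to_S}t_2^R$ together with $t_2\to_R^{*}t_2^R$. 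Thus both $t_1$ and $t_2$ reduce under $\to_{SR}$ to the single term $t_2^R$, so $\to_{SR}$ is locally confluent, and Newman's lemma upgrades this to CR.

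The hard part is precisely this mixed local-confluence case, which is where the proviso is consumed: the projection condition is what lets me realign the $\to_S$-reduct $t_2$ against the $\to_R$-normal forms so that the diagram closes at $t_2^R$. Everything else is bookkeeping --- uniqueness of $\to_R$-normal forms for the two homogeneous cases, and the remark that SN of $\to_R$ forces an infinite $\to_{SR}$-reduction to carry infinitely many $\to_S$-steps.
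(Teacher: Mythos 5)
Your proof is correct. Note that the paper itself supplies no proof of this lemma --- it is imported from \cite{A93} by citation only --- so there is no in-text argument to compare against; judged on its own, your two-stage argument is sound: SN by projecting a hypothetical infinite $\to_{SR}$-reduction onto $\to_R$-normal forms (constant on $\to_R$-steps because $\to_R$ is SN+CR and normal forms are unique, strictly advancing via $\stackrel{+}{\to_S}$ on $\to_S$-steps by the proviso, hence an infinite $\to_S$-reduction out of $t_0^R$, contradicting SN of $\to_S$), and then CR via Newman's lemma, with the mixed peak $t_1\leftarrow_R t\to_S t_2$ closed at $t_2^R$ exactly as you describe. One remark: the detour through Newman's lemma is avoidable. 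Iterating your projection gives the many-step statement that $t\rtc_{SR} s$ implies $t^R\stackrel{*}{\to_S}s^R$; applying this to both legs of an arbitrary peak $t_1\ltc_{SR} t\rtc_{SR} t_2$ yields $t_1^R\stackrel{*}{\leftarrow_S}t^R\stackrel{*}{\to_S}t_2^R$, which CR of $\to_S$ closes at some $w$, and then $t_i\rtc_R t_i^R\stackrel{*}{\to_S}w$ exhibits a common $\to_{SR}$-reduct. This proves full confluence directly (not merely local confluence), keeps the SN and CR halves of the lemma independent of one another, and is likely closer to the original argument of \cite{A93}; your version buys a shorter confluence case at the price of having to establish SN first.
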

By inductive arguments, the author proved that  
\begin{align}
\label{goal} t\to_{\beta\pi_1\pi_2 T} s\implies t^{\reta\rSP}\stackrel{+}{\to_{\beta\pi_1\pi_2T}}s^{\reta\rSP}\end{align} 
where $u^{\reta\rSP}$ is  $\to_{\reta\rSP}$-normal form of $u$. By
Lemma~\ref{lem:divide-and-conquer},  SN+CR of Mints' reduction follows. The SN of Mints' reduction implies
 $\to_{\reta}\;=\;\leftarrow_\eta\setminus\leftarrow_\beta$ and
 $\to_{\rSP}\;=\;\leftarrow_{SP}\setminus\leftarrow_{\pi_1}\setminus\leftarrow_{\pi_2}$.
 
  \v{C}ubri\'c  proved the weak normalization of Mints' reduction in \cite{Cubric} and then
published the proof of SN in his thesis~\cite{CubricPhD}. His SN proof is showing \eqref{goal} by proving the commutativity~\cite[Proposition~3.29]{CubricPhD} of $\stackrel{*}{\to}_{\beta\pi_1\pi_2 T}$ and $\stackrel{*}{\to}_{\reta\rSP}$ and the presevation~\cite[Proposition~3.40]{CubricPhD} of $\beta\pi_1\pi_2T$-normal form by $\to_{\reta\rSP}$.
The authors proved \eqref{goal} by mostly inductive argument.
This was part of his generalization of Friedman's theorem for CCCs~\cite{CubricJPAL}.
According to Phil Scott,  the work of   \v{C}ubri\'c  was motivated by the fact that Mints' expansionary rewrites contained mistakes
and Mints' results were wrong!  He resolved the issue with a  detailed analysis of $\eta$-expansion. The issue
was extending the Friedman Set-interpretation from a free CCC $C$ into the category of sets into the free CCC $C(X)$ with an infinite set of indeterminates $X$
adjoined. It was  the problem of the faithfulness of the embedding $C \to C(X)$  which led to   \v{C}ubri\'c   finding the mistakes in Mints' work.

  \v{C}ubri\'c, Dybjer, and Scott employed \emph{normalization-by-evaluation}~(NBE) techniques 
to prove directly the decision problem for the free CCC, without needing any Church-Rosser, SN, or even rewriting at all.  But
they did a computability argument at the end to in fact show that, from a traditional viewpoint, they are actually constructing
long $\beta\eta$-normal forms. They added appendix in the proof of paper (mostly by   \v{C}ubri\'c) where for each typed lambda calculi generated by (i) a graph, (ii) a category,
(iii) a cartesian category,  he tried to prove that the NBE decision procedure makes sense, and reduces the problem of the higher
order structure roughly down to the decision problem of the underlying theory. For this purpose, he attempted to prove the transitivity rule of the equality is  admissible in a formalized equational theory, by a similar proof technique of cut-elimination theorem of proof theory.

 Although $\to_{\reta \rSP}$ is not
stable under contexts, 
 the finite
 development-like argument based on $\leftarrow_{\eta SP}$ proves CR of $\to$~\cite{DBLP:journals/entcs/KhasidashviliO95} pointed out that.

  	In \cite{DK94}, Di Cosmo and Kesner proved CR+SN of a reduction
system
$\to_\beta\cup\to_{\reta}\cup\to_{\pi_1}\cup\to_{\pi_2}\cup\to_{\rSP}\cup\to_\gentop$ union the $\beta$-like reductions of sum types.
By showing how substitution and the reduction interact with the context-sensitive rules, they
proved the WCR.  They simulated expansions without
expansions, to reduce  SN of the reduction to SN for the
underlying calculus without expansions, provable by the
standard reducibility method.

The rewriting system $\rts$ of Curien and Di Cosmo is stable under contexts~(i.e.,
  $t\to t'\implies \cdots t\cdots \to
       \cdots t'\cdots$.)
Mints' reduction 
decides the equational theory $\lambda\beta\eta\pi*$.  Mints' reduction is not stable under contexts.

Mints' reduction  fits to  semantic
treatments such as NBE~\cite{BDF04}. See \cite{ACD} in the context of type-checking of dependent type theories).
However, because of the complication of Mints' reduction, in his book~\cite{Min92} on  selected papers of proof theory,
 Mints replaced his reduction with the $\beta\eta$-reduction modulo equivalence relation on terms. His purpose is to give a simple proof of difficult theorems of category theory with 
 typed $\lambda$-calculus and proof theory by using the correspondence objects = types = propositions and arrows = terms = proofs. Mac Lane is interested in his ambition~\cite{MacLane82}.  
 

\begin{thebibliography}{10}
\bibitem{A93}
Y.~Akama.
\newblock On {M}ints' reductions for ccc-calculus.
\newblock In M.~Bezem and J.~Groote, editors, {\em Proceedings of the 1st
  International Conference on Typed Lambda Calculus and Applications}, 
  volume~664 of {\em LNCS}, pages 1--12. Springer-Verlag, 1993.

\bibitem{Akama17}
Y~Akama.
\newblock The confluent terminating context-free substitutive rewriting system
  for the lambda-calculus with surjective pairing and terminal type.
\newblock In D.~Miller, editor, {\em Proceedings of the {2nd} International
  Conference on Formal Structures for Computation and Deduction {(FSCD'17)}},
  LIPIcs. Schloss Dagstuhl--Leibniz-Zentrum f{\"u}r Informatik, 2017.
  
\bibitem{ACD}
A.~Abel, T.~Coquand, and P.~Dybjer.
\newblock Verifying a semantic $\beta\eta$-conversion test for {M}artin-{L\"o}f
  type theory.
\newblock In P.~Audebaud and C.~Paulin-Mohring, editors, {\em Proceedings of
  the 9th International Conference on Mathematics of Program Construction
  ({MPC} '08)}, pages 29--56, 2008.

\bibitem{ACP}
A.~Abel, T.~Coquand, and M.~Pagano.
\newblock A modular type-checking algorithm for type theory with singleton
  types and proof irrelevance.
\newblock {\em Logical Meth. in Comput. Sci.}, 7(2:4):1--57, 2011.

\bibitem{AP}
A.~Abel and B.~Pientka.  
\newblock Higher-order dynamic pattern unification for
dependent types and records.
\newblock In C.-H. Luke Ong, editor, {\em Proceedings of
  the 10th International Conference on
Typed  Lambda  Calculi and Applications (TLCA '11)}, 
volume~6690 of {\em LNCS}, pages 10--26. Springer, 2011.

\bibitem{Baader}
F. Baader and F. Nipkow. 
\newblock {\em Term Rewriting and All That. }
\newblock Cambridge University Press, 1998.

\bibitem{BDF04}
V.~Balat, R.~Di Cosmo, and M.~Fiore.
\newblock Extensional normalisation and type-directed partial evaluation for
  typed lambda calculus with sums.
\newblock In {\em Proceedings of the 31st Symposium on Principles of
  Programming Languages ({POPL} 2004)}, pages 64--76. ACM Press, Jan. 2004.

\bibitem{B01}
A.~Beckmann.
\newblock Exact bounds for lengths of reductions in typed $\lambda$-calculus.
\newblock {\em J. Symb. Logic}, 66(3):1277--1285, 2001.

\bibitem{UEC}
S.~Berardi, M.~Coppo, F.~Damiani, and P.~Giannini.
\newblock Type-based useless-code elimination for functional programs position paper.
\newblock In {\em Semantics, Applications, and Implementation of Program Generation}, volume~1924 of {\em LNCS}, pages 172--189. Springer, 2000.

\bibitem{Blanqui}
F.~Blanqui.
\newblock Computability closure: ten years later.
\newblock In H.~Common-Lundh, C.~Kirchner, and H.~Kirchner, editors, {\em
  Rewriting, Computation and Proof}, volume~4600 of {\em LNCS}, pages 68--88.
  Springer, 2007.

\bibitem{bjo}
F.~Blanqui, J.~P. Jouannaud, and M.~Okada.
\newblock Inductive-data-type systems.
\newblock {\em Theoret. Comput. Sci.}, 272:41--68, 2002.

\bibitem{Cubric}
D.~{\v{C}}ubri{\'c}.
\newblock On free ccc.
\newblock Distributed on the types mailing list, 1992.

\bibitem{CubricPhD}
D.~{\v{C}}ubri{\'c}.
\newblock {\em Results in Categorical Proof Theory}.
\newblock PhD thesis, Department of Mathematics and Statistics, McGill
  University, Jul. 1993.

\bibitem{CubricJPAL}
D~{\v{C}}ubri{\'c}.
\newblock Embedding of a free cartesian-closed category into the category of
  sets.
\newblock {\em J. Pure Appl. Algebr.}, 126:121--147, 1998.

\bibitem{CDS98}
D.~{\v{C}}ubri{\'c}, P.~Dybjer, and P.J. Scott.
\newblock Normalization and the Yoneda embedding.
\newblock {\em Math. Structures Comput. Sci.}, 8(2):153--192, 1998.

\bibitem{CD}
P.-L. Curien and R.~Di Cosmo.
\newblock A confluent reduction system for the lambda-calculus with surjective
  pairing and terminal object.
\newblock {\em J. Funct. Programming}, 6(2):299--327, 1996.
\newblock A preliminary version appeared with the same authors and the same
  title in: J.~Leach Albert, B.~Monien, and M.~Rodriguez Artalejo, editors,
  \emph{Proceedings of International Colloquium on Automata, Languages and
  Programming (ICALP)}, Vol.~510 of LNCS, pages 291--302. Springer-Verlag,
  1991.

\bibitem{DK94}
R.~Di~Cosmo and D.~Kesner.
\newblock Simulating expansions without expansions.
\newblock {\em Math. Structures Comput. Sci.}, 4:1--48, 1994.

\bibitem{Gallier89}
J.H.~Gallier.
\newblock On Girard's ``{C}andidats de {R}eductibilit\'e''. In P.~Odifreddi, editor, \emph{Logic and Computer Science}, 
\newblock Academic Press, 1989.

\bibitem{GTL}
J.-Y.~Girard, P.~Taylor, and Y.~Lafont.
\newblock {\em Proofs and Types}.
\newblock Cambridge University Press, 1989.

\bibitem{Gog05}
H.~Goguen.
\newblock A syntactic approach to eta equality in type theory.
\newblock In {\em Proceedings of the 32nd Symposium on Principles of
  Programming Languages ({POPL} 2005)}, pages 75--84. ACM Press, Jan. 2005.

\bibitem{hagiya}
M.~Haigya.
\newblock From programming-by-example to proving-by-example.
\newblock In T.~Ito and A.~R. Meyer, editors, {\em Proceedings of Theoretical
  Aspects of Computer Software}, volume~526 of {\em LNCS}, pages 387--419.
  Springer-Verlag, 1991.

\bibitem{HHP93}
R.~Harper, F.~Honsell, and G.~Plotkin.
\newblock A framework for defining logics.
\newblock {\em J. ACM}, 40:143--184, 1993.

\bibitem{Has}
R.~Hasegawa.
\newblock Categorical data types in parametric polymorphism.
\newblock {\em Math. Structures Comput. Sci.}, 4(1):71--109, 1994.

\bibitem{HarperPfenning}
R.~Harper and F.~Pfenning.
\newblock On equivalence and canonical forms in the {LF} type theory.
\newblock {\em ACM Trans. Comput. Log.}, 6(1):61--101, 2005.

\bibitem{HRR}
C.~Hermida, U.~Reddy and E.~Robinson.
\newblock Logical Relations and Parametricity --
A Reynolds Programme for Category Theory and Programming Languages.
\newblock {\em Electronic Notes in Theoretical Computer Science} 303:149--180, 2014.

\bibitem{H70}
W.~A. Howard.
\newblock Assignment of ordinals to terms for primitive recursive functionals
  of finite type.
\newblock In A.~Kino, J.~Myhill, and R.~Vesley, editors, {\em Intuitionism and
  {P}roof {T}heory ({P}roc. {C}onf., {B}uffalo, {N}.{Y}., 1968)}, pages
  443--458. North-Holland, Amsterdam, 1970.

\bibitem{JG95}
B.~Jay and N.~Ghani.
\newblock The virtue of eta-expansion.
\newblock {\em J. Funct. Programming}, 5(2):135--154, 1995.

\bibitem{StoneSpaces}
P.~T. Johnstone. 
\newblock {\em Stone spaces}.  Cambridge University Press. 1986.

\bibitem{jr}
J.-P. Jouannaud and A.~Rubio.
\newblock Polymorphic higher-order recursive path orderings.
\newblock {\em J. ACM}, 54(1):1--48, 2007.

\bibitem{DBLP:journals/entcs/KhasidashviliO95}
Z.~Khasidashvili and V.~van Oostrom.
\newblock Context-sensitive conditional expression reduction systems.
\newblock {\em Electr. Notes Theor. Comput. Sci.}, 2:167--176, 1995.

\bibitem{K80}
J.~W. Klop.
\newblock Combinatory reduction systems.
\newblock Technical report, Mathematical center tracts, 1980.

\bibitem{KobaUEC99} N.~Kobayashi. 
\newblock Type-based useless variable elimination. 
\newblock In {\em Proceedings of the 2000 ACM SIGPLAN Workshop on Partial Evaluation and Semantics-based Program Manipulation (PEPM'00)},
 pages~84--93, ACM Press, 1999.

\bibitem{kop}
C.~Kop.
\newblock {\em Higher Order Termination}.
\newblock PhD thesis, Vrije University, 2012.

\bibitem{LamSco}
J.~Lambek and P.~J. Scott.
\newblock {\em Introduction to Higher-Order Categorical Logic}.
\newblock Cambridge University Press, 1988.

\bibitem{Lindley07}
S.~Lindley.
\newblock Extensional rewriting with sums.
\newblock In S.~Ronchi~Della Rocca, editor, {\em Proceedings of the {8th}
International Conference on Typed Lambda Calculus and Applications},
volume~664 of {\em LNCS}, pages 255--271. Springer, 2007.

\bibitem{MacLane82}
S.~Mac~Lane.
\newblock Why commutative diagrams coincide with equivalent proofs.
\newblock In {\em Algebraists' Homage: Papers in Ring Theory and Related Topics
  ({{N}ew} {{H}}aven, {{C}}onn., 1981)}, volume~13, pages 387--401. Amer. Math.
  Soc., Providence, R.I., 1982.

\bibitem{Min77}
G.~Mints.
\newblock Closed categories, and proof theory.
\newblock {\em Zap. Nau\v cn. Sem. Leningrad. Otdel. Mat. Inst. Steklov.
  (LOMI)}, 68:83--114, 145, 1977.

\bibitem{Min79}
G.~Mints.
\newblock Teorija categorii $i$ teoria dokazatelstv. {I}.
\newblock In {\em Aktualnye Problemy Logiki i Metodologii Nauky}, pages
  252--278. 1979.

\bibitem{Min92}
G.~Mints.
\newblock Proof theory and category theory.
\newblock In {\em Selected Papers in Proof Theory}, pages 157--182.
  Bibliopolis, Naples; North-Holland Publishing Co., Amsterdam, 1992.

\bibitem{Newman}
M.H.A.~Newman.
\newblock On theories with a combinatorial definition of "equivalence". 
\newblock {\em Annals of Mathematics}, 43(2):223--243, 1942.


\bibitem{Norell}
U.~Norell.
\newblock {\em Towards a Practical Programming Language Based on Dependent Type
Theory}. 
\newblock PhD thesis, Department of Computer Science and Engineering, Chalmers
University of Technology, G{\"o}teberg, Sweden, Septermber, 2007. 
(See \url{http://wiki.portal.chalmers.se/agda/pmwiki.php} for further information on \texttt{Agda}.)

\bibitem{VO97}
V.~van Oostrom.
\newblock Developing developments.
\newblock {\em Theor. Comput. Sci.}, 175:159--181, 1997.

\bibitem{VOZ}
V.~van Oostrom and H.~Zantema.
\newblock Triangulation in rewriting.
\newblock In A.~Tiwari, editor, {\em Proceedings of the {23rd} International
  Conference on Rewriting Techniques and Applications {(RTA'12)}}, volume~15 of
  {\em LIPIcs}, pages 240--255. Schloss Dagstuhl--Leibniz-Zentrum f{\"u}r
  Informatik, 2012.
\newblock URL: \url{http://drops.dagstuhl.de/opus/volltexte/2012/3496}, \href
  {http://dx.doi.org/10.4230/LIPIcs.RTA.2012.240}
  {\path{doi:10.4230/LIPIcs.RTA.2012.240}}.

\bibitem{Pottinger81}
G.~Pottinger.
\newblock The Church-rosser theorem for the typed lambda-calculus with
  surjective pairing.
\newblock {\em Notre Dame J. of Formal Logic}, 22(3):264--268, 1981.
\newblock URL: \url{https://doi.org/10.1305/ndjfl/1093883461}.

\bibitem{SalvatiniWalukiewicz} S.~Salvatini and I.~Walukiewicz. 
\newblock Using models to model-check recursive schemes.
\newblock Logical Methods in Computer Science, 11:1--23, 2012.

\bibitem{Sarkar}
S.~Sarkar.
\newblock Metatheory of {LF} extended with dependent pair and unit types.
\newblock {CMU-CS-05-179}, School of Computer Science, Carnegie Mellon
  University, 2005.

\bibitem{S77}
K.~Sch{\"u}tte.
\newblock {\em Proof Theory}, volume~225 of {\em Grundlehren der Mathematischen
  Wissenschaften}.
\newblock Springer-Verlag, Berlin-New York, 1977.

\bibitem{toyama}
Y.~Toyama.
\newblock On the church-rosser property of term rewriting systems.
\newblock Technical report 17672, NTT ECL, Dec. 1981.
\newblock In Japanese.

\bibitem{Yokouchi}
H.~Yokouchi.
\newblock {\em On Categorical Models of the Lambda Calculus}.
\newblock PhD thesis, Tokyo Institute of Technology, 1986.

\end{thebibliography}
\end{document}